\makeatletter                            \@ifclassloaded{beamer}{}{\PassOptionsToPackage{pagebackref}{hyperref}}
\definecolor{ryan}{RGB}{64, 0, 64}
\definecolor{nix}{RGB}{255, 0, 0}
\definecolor{ucdblue1}{cmyk}{.87,.46,0,.49} \definecolor{ucdblue2}{cmyk}{1., .56, 0., .34}
\colorlet{ucdblue}{ucdblue2}
\def\@lox@prtc{\section*{\@fxlistfixmename}\begingroup\def\@dotsep{4.5}}
\def\@lox@psttc{\endgroup}
\colorlet {past_color}    {red}
\colorlet {pres_color}    {blue}
\colorlet {futu_color}    {black!30!green}
\colorlet {temp_color_1}  {red!50!blue}
\colorlet {temp_color_2}  {red!50!green}
\colorlet {temp_color_3}  {blue!50!green}
\colorlet {hmu_color}     {blue!67!green}
\colorlet {rhomu_color}   {temp_color_1!80!blue}
\colorlet {rmu_color}     {blue}
\colorlet {bmu_1_color}   {temp_color_1}
\colorlet {bmu_2_color}   {temp_color_3}
\colorlet {qmu_color}     {temp_color_1!67!green}
\colorlet {wmu_color}     {temp_color_2!57!blue}
\colorlet {sigmamu_color} {temp_color_2}
\lstdefinestyle{mypython}{
language=Python,                        basicstyle=\small\ttfamily,             keywordstyle=\color{green!50!black},    commentstyle=\color{gray},              numbers=left,                           numberstyle=\tiny,                      stepnumber=1,                           numbersep=5pt,                          backgroundcolor=\color{gray!10},        frame=none,                             tabsize=2,                              captionpos=b,                           breaklines=true,                        breakatwhitespace=false,                showspaces=false,                       showtabs=false,                         morekeywords={as},                      }
\theoremstyle{plain}    
\theoremstyle{plain}    
\theoremstyle{plain}    
\theoremstyle{plain}    
\theoremstyle{plain}    
\theoremstyle{plain}    
\theoremstyle{plain}    
\theoremstyle{plain}    
\theoremstyle{plain}    
\theoremstyle{plain}    
\theoremstyle{plain}    
\theoremstyle{plain}
\newcommand{\CausalState}       { \mathcal{S} }
\newcommand{\forward}{+}
\newcommand{\reverse}{-}
\newcommand{\forwardreverse}{\pm} 
\newcommand{\FutureCausalState} { {\CausalState}^{\forward} }
\newcommand{\PastCausalState}   { {\CausalState}^{\reverse} }
\newcommand{\lastindex}[2]{
  \edef\tempa{0}
  \edef\tempb{#2}
  \ifx\tempa\tempb
\edef\tempc{#1}
  \else
\edef\tempa{0}
    \edef\tempb{#1}
    \ifx\tempa\tempb
      \edef\tempc{#2}
    \else
      \edef\tempc{#1+#2}
    \fi
  \fi
  \tempc
}
\newcommand{\CSjoint}[1][,]{
   \edef\tempa{:}
   \edef\tempb{#1}
   \ifx\tempa\tempb
\ensuremath{\FutureCausalState\!#1\PastCausalState}
   \else
\ensuremath{\FutureCausalState#1\PastCausalState}
   \fi
}
\newcommand{\CSjointKL}[3][,]{
   \edef\tempa{:}
   \edef\tempb{#1}
   \ifx\tempa\tempb
\ensuremath{\FutureCausalState_{#2}\!#1\PastCausalState_{#3}}
   \else
\ensuremath{\FutureCausalState_{#2}#1\PastCausalState_{#3}}
   \fi
}
\newif\ifpm
\edef\tempa{\forwardreverse}
\edef\tempb{\pm}
  \def\clap#1{\hbox to 0pt{\hss#1\hss}}
\newcommand{\op} [3] [] {
  \ensuremath{
    \operatorname{#2_{#1}}
    \if\relax\detokenize{#3}\relax
    \else
      \left[ #3 \right]
    \fi
  }
  \xspace
}
\definecolor{shadecolor}{rgb}{0.95,0.95,0.9}
\newcommand{\argmin}{\text{argmin}}
\newcommand{\argmax}{\text{argmax}}
\newcommand{\kB}{k_\text{B}}
\newcommand{\stationary}{\boldsymbol{\pi}}
\newcommand{\EP}{\Sigma}
\newcommand{\EF}{\Phi}
\newcommand{\tr}{\text{tr}}
\newcommand{\EFO}{\boldsymbol{\Phi}}
\newcommand{\genop}{\mathcal{X}}
\newcommand{\sysref}{I/d}  
\newcommand{\constname}{\eta}
\newcommand{\const}{\tfrac{d-1}{d}}
\newcommand{\ident}{I}
\newcommand{\pp}{\vec{\pi}}
\newcommand{\bb}{\vec{b}}
\newcommand{\pbb}[1][\bb]{\tilde{\rho}(#1)}
\newcommand{\pbbF}[1][\bb]{\tilde{\rho}'(#1)}
\newcommand{\Hess}{\mathbf{H}}
\newtheorem{thm}{Theorem}
\newcommand{\barrier}{g}
\newcommand{\tilt}{f}
\begin{document}

\def\ourTitle{Thermodynamically ideal 
quantum-state
inputs to 
any device
}

\def\ourAbstract{We investigate and 
ascertain the ideal inputs to any finite-time thermodynamic process.  
We demonstrate that the expectation values of
entropy flow, heat, and work can all be determined via Hermitian observables of  
the initial state.  
These Hermitian operators 
encapsulate the breadth of 
behavior and the ideal inputs for common thermodynamic objectives. 
We show how to construct these Hermitian operators 
from measurements of thermodynamic output from a finite number of effectively arbitrary inputs.
Behavior of a small number of test inputs
thus determines the full range of 
thermodynamic behavior from all inputs.
For any process,
entropy flow, heat, and work can all be extremized by pure input states---eigenstates 
of the respective operators.
In contrast, the input states that minimize entropy production or maximize the change in free energy are non-pure mixed states obtained 
from the operators
as the solution of a convex optimization problem.
To attain these, 
we provide an easily implementable 
gradient descent method on the manifold of density matrices, where an analytic solution yields 
a valid direction of descent at each iterative step.
Ideal inputs within a limited domain, and their associated thermodynamic operators, are obtained with less effort.
This allows 
analysis of
ideal thermodynamic inputs within quantum subspaces of infinite-dimensional quantum systems; 
it also allows 
analysis of
ideal inputs in 
the classical limit.
Our examples illustrate the diversity of `ideal' inputs:
Distinct initial states minimize entropy production, extremize the change in free energy, and maximize work extraction.
}

\def\ourKeywords{nonequilibrium thermodynamics, 
  quantum thermodynamics,
  entropy production, 
  work extraction, generalized Bloch vectors
}

\hypersetup{
  pdfauthor={Paul M. Riechers},
  pdftitle={\ourTitle},
  pdfsubject={\ourAbstract},
  pdfkeywords={\ourKeywords},
  pdfproducer={},
  pdfcreator={}
}

\title{\ourTitle}

\author{Paul M. Riechers}
\email{pmriechers@gmail.com}

\affiliation{Beyond Institute for Theoretical Science, San Francisco, California, USA}

\author{Chaitanya Gupta}

\affiliation{Department of Applied Mathematics and Theoretical Physics,
	University of Cambridge, UK}

\author{Artemy Kolchinsky}

\affiliation{Universal Biology Institute, The University of Tokyo, 7-3-1 Hongo, Bunkyo-ku, Tokyo 113-0033, Japan}

\author{Mile Gu}
\email{mgu@quantumcomplexity.org}

\affiliation{Nanyang Quantum Hub,
	School of Physical and Mathematical Sciences,
Nanyang Technological University, Singapore}

\affiliation{CNRS-UNS-NUS-NTU International Joint Research Unit, UMI 3654, Singapore 117543, Singapore}

\affiliation{Centre for Quantum Technologies, National University of Singapore}

\date{\today}
\bibliographystyle{unsrt}

\begin{abstract}
\ourAbstract
\end{abstract}

\date{\today}
\maketitle

\setstretch{1.1}

\section{Introduction}

Throughout its history, thermodynamics primarily investigated
the efficiency of
various control processes for implementing a desired functionality.  
However, the complementary question of \emph{which initial physical states produce the best thermodynamic behavior} 
remains 
relatively unexplored.
Indeed, there is a historical reason for this:
In equilibrium transformations, the system always stays infinitesimally close to equilibrium, so there is no sense in asking about alternative inputs to the process.  
Yet
modern devices 
transform quantum and classical system rapidly.
These finite-time nonequilibrium transformations
have highly non-trivial initial-state dependence.
Here we explore the ideal thermodynamic inputs to such devices, where the system can be arbitrarily far from equilibrium throughout the transformation.

The initial-state dependence of entropy production and associated thermodynamic quantities has been explored only recently in relation to the ideal inputs, via mismatch costs~\cite{Kolc17, Riec21_Initial, Riec21_Impossibility, Kolc21_State}.
However, the minimally dissipative input was only characterized in the case of reset processes~\cite{Riec21_Impossibility, Kolc21_State} and, even then, a construction was only given for qubits~\cite{Riec21_Impossibility}.
In the following, we constructively identify the thermodynamically ideal inputs for a much broader class of objectives, including heat minimization, maximizing work extraction, and maximizing gain in free energy.  Moreover, the ideal inputs are characterized and constructively identified for systems of arbitrary finite dimensions, for any finite-time process.
The results apply to both quantum and classical systems.
The results thus apply broadly,
from biology to electronics to cosmology, 
wherever finite-time thermodynamics is relevant.

\section{Overview of Framework}

It is often desirable for a physical device to implement a fixed transformation on arbitrary input. 
For example, in quantum technology,
we often want to construct a device that implements 
a prescribed completely positive and trace preserving (CPTP) map.
The device can achieve this 
via a time-dependent protocol that partially controls the system's Hamiltonian and its interactions with its environment.
We assume 
that the 
control protocol and the initial state of the environment are fixed,
while we are free to select 
the initial (mixed or pure) state of the system to input to our device.
The initial joint state of the system--environment supersystem is then $\rho_0^\text{tot} = \rho_0 \otimes \rho_0^\text{env}$.
The joint system--baths supersystem evolves unitarily via $U_t$, such that the joint state at any later time $t$ is given as 
$\rho_t^\text{tot} = U_{t} \rho_0^\text{tot} U_{t}^\dagger$.
The reduced states of the system and environment are given at any time by the appropriate partial trace of the joint state:
$\rho_t = \tr_\text{env}(\rho_t^\text{tot} )$ 
and
$\rho_t^\text{env} = \tr_\text{sys}(\rho_t^\text{tot} )$.	
Any CPTP
transformation of the system can be achieved this way.

In the following sections, we will discover the thermodynamically ideal  inputs to any such CPTP transformation.
To achieve this, we first introduce \emph{thermodynamic operators} in Sec.~\ref{sec:ThermoOperators},
\emph{generalized Bloch vectors} in Sec.~\ref{sec:GenBlochVectors},
and \emph{thermodynamic vectors}
in Sec.~\ref{sec:ThermoVectors}.
We then
show how to construct
these thermodynamic vectors and operators
from experimental observations
in Sec.~\ref{sec:Construct}.
This finally allows us to discuss and construct the thermodynamically ideal inputs
in Secs.~\ref{sec:IdealType1Inputs}
and \ref{sec:IdealType2Inputs}.
Sec.~\ref{sec:Ideal_restriction}
shows how these results apply to restricted subspaces---which is important for understanding the classical limit, and for applying the results to low-energy subspaces of infinite-dimensional systems.
Finally, the examples in Secs.~\ref{sec:Examples} and \ref{sec:Example2}
illustrate physical implications of our results.

\section{Thermodynamic operators}
\label{sec:ThermoOperators}

Thermodynamic quantities like work, heat, and entropy flow are 
notoriously
path-dependent quantities.  
Even their average values depend on the time-dependent control protocol
and the time-dependent density matrices of system and environment.
For example,
the expectation value of
entropy flow can very generally be calculated as
\begin{align}
	\braket{\EF}_{\rho_0} 
	&= 
	- \kB \int_0^\tau \tr \bigl( \dot{\rho}_t^\text{env}  \ln \stationary_{t}^\text{env} \bigr) \, dt ~,
	\label{eq:EFdef}
\end{align}	
where $\stationary_{t}^\text{env}$ is a tensor product of local-equilibrium reference states for the environment at time $t$.
Familiar expressions like 
$\braket{\EF}_{\rho_0} = \int \tfrac{\braket{ \delta Q^{(b)} } }{T_t^{(b)}} \, dt$
and
$\braket{\EF}_{\rho_0} = \tfrac{\braket{Q^{(b)}} }{T_0^{(b)} } + \tfrac{\braket{Q^{(b')}} }{ T_0^{(b')} }$
are special cases of Eq.~\eqref{eq:EFdef},
where 
$Q^{(b)}$ is the change in energy of bath $b$,
and
$T_t^{(b)}$ is the temperature of bath $b$ at time $t$.
With $H_t$ as the time-dependent Hamiltonian of the system,
similar integral expressions for work 
\begin{align}
\braket{W}_{\rho_0} 
= 
\int_0^\tau \tr \bigl( \rho_t \dot{H}_t  \bigr) \, dt
\end{align}	
and heat absorbed by the system
\begin{align}
\braket{Q}_{\rho_0} 
= 
\int_0^\tau \tr \bigl(  \dot{\rho}_t H_t  \bigr) \, dt
\end{align}	
are familiar centerpieces of the thermodynamic arsenal.  In this path-dependent spirit, it has been famously emphasized that ``work is not an observable''~\cite{Talk07_Fluctuation}.

It is therefore perhaps surprising that the expectation values of work, heat, and entropy flow can all be determined via Hermitian observables of the initial state.  As we show in 
App.~\ref{sec:LinearFunctionals},
since the above expectation values
are all linear functionals of the initial density matrix of the system,
they can be expressed as 
\begin{align}
	\braket{\EF}_{\rho_0} 
	= 
	\tr(\rho_0 \EFO) ~, \qquad
	\braket{W}_{\rho_0} 
	= 
	\tr(\rho_0 \mathcal{W}) ~,
	\qquad \text{and} \quad
	\braket{Q}_{\rho_0} 
	= 
	\tr(\rho_0 \mathcal{Q}) ~,
	\label{eq:ThermoOperatorEquations}
\end{align}	
where $\EFO$, $\mathcal{W}$, and $\mathcal{Q}$ are Hermitian operators,
which we refer to as the \emph{expected-entropy-flow operator}, \emph{expected-work operator}, and \emph{expected-heat operator} respectively.
We will show that, for a general finite-time thermodynamic process,
these \emph{thermodynamic operators} can be constructed via experimental observations from a finite number of arbitrary inputs.
The operators in turn reveal the special collection of initial states that 
minimize entropy flow, minimize heat, and maximize work extraction.
The operators also allow direct calculation
of thermodynamic expectation values from any initial state.
 
The expressions in Eq.~\eqref{eq:ThermoOperatorEquations} all clearly take the form
\begin{align}
	\braket{X}_{\rho_0} 
	&= 
	\tr(\rho_0 \genop) ~,
	\label{eq:GenopTraceFormula}
\end{align}	 
which we will study in general.
We will say that a thermodynamic quantity is `type-I' 
when 
its expectation value can be expressed as a linear functional of the initial state $\tr(\rho_0 \genop )$, as in Eq.~\eqref{eq:GenopTraceFormula}.
As an added benefit of studying this general formulation,
our methods can also be applied to reconstruct the Hamiltonian, the expected-change-of-energy operator, and infinitely many other linear operators that conform to this general linear structure for expectation values.
If the expectation value is always real-valued,
then
$\genop$ is guaranteed to be Hermitian (see Thm.\ 2.4.3 of Hassani \cite{Hass99a}).

It is known that invasive measurements 
can change the expected value of work and other thermodynamic variables~\cite{Pera17_No}.
We emphasize that different measurement schemes
must be treated as distinct quantum processes, since
each implies a unique sequence of
dynamic interventions.
Accordingly, each measurement scheme induces its own set of thermodynamic operators.
For example,
in App.~\ref{sec:MeasurementSchemes},
we address the two-point measurements (TPM) scheme,
and
give an explicit construction of the expected-TPM-work operator $\mathcal{W}_\text{TPM}$,
for which
\begin{align}
\braket{W_\text{TPM}}_{\rho_0} = \tr(\rho_0 \mathcal{W}_\text{TPM} )~.
\end{align}
Indeed,
for any measurement scheme---TPM~\cite{Espo09_Nonequilibrium}, one-point measurement~\cite{Deff16_Quantum, Beye20_Work, Sone20_Quantum}, or any other scheme---thermodynamic operators can be constructed, and our framework can be applied to identify 
both the breadth of behavior and
the ideal inputs within the scheme.

It is important to note that each of these thermodynamic operators $\mathcal{X}$
contain all information needed for expectation values of the relevant thermodynamic quantity $X$, but do not contain the information that would be needed for expectation values of functions of the thermodynamic quantity, like $X^2$ or $e^X$.  
As a point of nomenclature, we note that this is not about whether the operators are `observable', as has been suggested~\cite{Talk07_Fluctuation}, 
but is rather an elementary fact about expectation values---$\braket{X}$ does not determine $\braket{f(X)}$. 
The source of insufficiency is simple to see if we take work as an example:
The full work distribution of a process, whether classical or quantum, has support on a space much larger than the dimension $d$ of the system.  Accordingly, the $d$ eigenvalues of the thermodynamic operator cannot represent the full probability distribution of the thermodynamic quantity.~\footnote{Instantaneous thermodynamic quantities---like energy, position, or momentum---offer a familiar exception, since they describe a variable with the same dimension as the system.  Accordingly, their $d \times d$ operators can be used to calculate all moments of their representative quantity.  Similarly, sufficiently high-dimensional representations of other thermodynamic operators would allow their encapsulation of higher moments.}

To address a final nuance,
we note that
a random variable $X$ in the quantum domain 
may depend on
further specification of subensembles, 
since a density matrix can be decomposed in many ways~\cite{Alla05_Fluctuations}.
Nevertheless, 
as discussed in 
App.~\ref{sec:RVsForQuantumThermo},
all decompositions of the initial density matrix lead to the same expectation value 
for a fixed quantum process.
Accordingly,
our notation ``$\braket{X}_{\rho_0}$''
unambiguously refers to the expectation value 
of the relevant thermodynamic quantity $X$
for all possible decompositions
of the quantum state.

Despite these caveats,
the thermodynamic operators serve immense utility.
Their $d$
eigenstates 
represent the only $d$ features that influence the expected value 
of the thermodynamic variable.
Once inferred for a process,
the thermodynamic operators tell the full breadth of expected behavior from any input, and identify the unique pure states leading to extremal behavior.  Moreover, the performance of any of the infinitely many possible inputs can be calculated simply and directly
from the thermodynamic operator, 
without needing to run a new experiment each time.

\section{Generalized Bloch vector}
\label{sec:GenBlochVectors}

We have promised that thermodynamic operators 
are the key to identifying thermodynamically ideal inputs to any process.
It will be important then to be able to construct these operators.
Our construction will lean on generalized Bloch vectors and related thermodynamic vectors, introduced in this and the next section respectively.

It is well known that the state of a qubit $\rho_t$ can be expressed via its Bloch vector $\vec{a}_t$:
\begin{align}
	\rho_t = I/2 + \vec{a}_t \cdot \vec{\sigma} / 2 ~,
	\label{eq:BlochQubit}
\end{align}
where 
$\vec{\sigma} = (\sigma_x, \sigma_y, \sigma_z)$ is the vector of Pauli matrices.	
For a quantum system of arbitrary finite dimension---i.e., a qudit $\rho_t$ acting on
a $d$-dimensional vector space $\mathcal{V}_d$---we 
achieve something similar via a slight adaptation of
Ref.~\cite{Jako01}. We choose any complete basis 
$(I/d, \Gamma_1, \Gamma_2, \dots \Gamma_{d^2 - 1})$ for linear operators acting on $\mathcal{V}_d$,
such that the Hermitian operators $\Gamma_n$ are all traceless and mutually orthogonal, satisfying
\begin{subequations}
	\begin{align}
		\tr( \Gamma_n) &= 0 , \; \text{ and } 
	\label{eq:traceless}  \\	
	\tr(\Gamma_m \Gamma_n) &=  \constname \, \delta_{m,n} ~,
\label{eq:orthonormal}
\end{align}
\end{subequations}
where we will choose the normalizing constant to be $\constname = \tfrac{d-1}{d}$.
Any density matrix then has a unique decomposition in the 
operator basis 
$\vec{\Gamma} = (\Gamma_1, \Gamma_2, \dots \Gamma_{d^2 - 1})$, 
described by the 
\emph{generalized Bloch vector} $\vec{b}_t \in \mathbb{R}^{d^2 - 1}$
via 
\begin{align}
	\rho_t = \sysref + \vec{b}_t \cdot \vec{\Gamma} ~.
	\label{eq:GenBlochExpansion}
\end{align}
Since the magnitude of the
Bloch vector is 
$b_t = \sqrt{\frac{\tr(\rho_t^2) d - 1}{d-1}}$,
the density matrix represents a pure state iff
the magnitude of its corresponding Bloch vector is one.  For $d>2$, not all points in the Bloch ball correspond to physical states, but the set of all physical states is nevertheless a convex set---the convex hull of the pure states, which all lie on a $2(d-1)$-dimensional submanifold of the $(d^{2}-2)$-dimensional surface of the Bloch sphere~\cite{Jako01}.

For concreteness, 
we can 
choose the ordered operator basis to be a scaled ordering of 
generalized Gell-Mann matrices---the generators of SU($d$).  See App.~\ref{sec:GGM} for details. 
The standard Bloch vector is then recovered 
in the familiar two-dimensional case of a qubit,
where then 
$\vec{\Gamma} = \vec{\sigma}/2 = ( \sigma_x/2, \, \sigma_y/2, \, \sigma_z/2 )$ and $\constname = 1/2$.
Alternatively, 
App.~\ref{sec:Composite_bases}
shows how to construct 
valid 
composite operator bases.

\section{Thermodynamic vectors}
\label{sec:ThermoVectors}

Leveraging the general Bloch decomposition Eq.~\eqref{eq:GenBlochExpansion} of the initial state,
we find that we can express each expectation value in Eq.~\eqref{eq:ThermoOperatorEquations} as 
\begin{align}
	\braket{X}_{\rho_0} 
	&= 
		\braket{X}_{\sysref} + \vec{b}_0 \cdot \vec{x} ~,
		\label{eq:BlochDecompOfExpectationValues}
\end{align}	 
where $\vec{x} \in \mathbb{R}^{d^2 - 1}$ is the relevant \emph{thermodynamic vector}
\begin{align}
\vec{x} = \tr(\vec{\Gamma} \genop) ~.
\end{align}
In particular, the thermodynamic vector could be the 
\emph{entropy-flow vector} 
$\vec{\varphi} = \tr(\vec{\Gamma} \EFO)$,
the 
\emph{work vector}
$\vec{w} = \tr(\vec{\Gamma} \mathcal{W})$,
or the 
\emph{heat vector}
$\vec{q} = \tr(\vec{\Gamma} \mathcal{Q})$.

Conversely, the thermodynamic operators can be constructed from the thermodynamic vectors:
\begin{align}
	\genop = 
	\braket{X}_{\sysref} I +  \vec{x} \cdot \vec{\Gamma} / \constname ~.
	\label{eq:ThermoOpFromVec}
\end{align}	 
Using Eqs.~\eqref{eq:traceless} and \eqref{eq:orthonormal}, it is easy to verify that 
Eq.~\eqref{eq:ThermoOpFromVec} satisfies
$\tr(\rho_0 \genop) 
	= 
	\braket{X}_{\sysref} + \vec{b}_0 \cdot \vec{x}$.

In the next sections, we show how both $\braket{X}_{\sysref} $ and the thermodynamic vector $\vec{x}$ can be obtained linear algebraically from experimental measurements of thermodynamic output from a finite number of almost arbitrary inputs~\footnote{The choice of inputs is arbitrary, except that they should all be linearly independent.  However, this is not much of a restriction: almost any $d^2$ or fewer inputs chosen at random will by linearly independent.}.
Via  Eq.~\eqref{eq:ThermoOpFromVec}, this allows us to experimentally reconstruct the thermodynamic operators from observations of any process.

\section{Constructing thermodynamic vectors and operators from observations} \label{sec:Construct}

Suppose an experimentalist has an apparatus to transform the state of a finite-dimensional quantum system.
(We will address infinite-dimensional systems in Sec.~\ref{sec:Ideal_restriction}.)
This experimentalist
measures the expectation value of the thermodynamic random variable $X$
that results from each of $d^2$ 
linearly independent inputs to their device.
I.e.,
they 
record the average quantity
from each of the
initial states $(\rho_0^{(n)})_{n=1}^{d^2}$ with corresponding generalized Bloch vectors $(\vec{b}_0^{(n)})_{n=1}^{d^2}$.
Note that the generalized Bloch vectors can be obtained as
$\vec{b}_0^{(n)} = \tr(\rho_0^{(n)} \vec{\Gamma}) / \constname $.
From Eq.~\eqref{eq:BlochDecompOfExpectationValues}, we see that
\begin{align}
	\underbrace{
		\begin{bmatrix}
			1&
			\vec{b}_0^{(1)} \\
			1&
			\vec{b}_0^{(2)} \\
			\vdots&
			\vdots \\
			1&
			\vec{b}_0^{(d^2)} 
		\end{bmatrix}
	}_{\eqqcolon B}
\begin{bmatrix}
		\braket{X}_{\sysref}  \\
		\vec{x}
	\end{bmatrix}
	&=
	\begin{bmatrix}
		\braket{X}_{\rho_0^{(1)}} \\
		\braket{X}_{\rho_0^{(2)}}  \\
		\vdots \\
		\braket{X}_{\rho_0^{(d^2)}} 
	\end{bmatrix}
	\label{eq:QuditLinalg}
\end{align}

Notice that the $B$ matrix defined in Eq.~\eqref{eq:QuditLinalg}
is invertible since the $d^2$ initial states are all linearly independent.
Hence,
with the $d^2$ measurements in hand and some simple linear algebra,
we can find both 
(i)
the expectation value from the reference input $\braket{X}_{\sysref}$
and
(ii)
the input-independent thermodynamic vector $\vec{x}$:
\begin{align}
	\begin{bmatrix}
		\braket{X}_{\sysref}  \\
		\vec{x}
	\end{bmatrix}
	&=
	B^{-1}
	\begin{bmatrix}
		\braket{X}_{\rho_0^{(1)}} \\
		\braket{X}_{\rho_0^{(2)}}  \\
		\vdots \\
		\braket{X}_{\rho_0^{(d^2)}} 
	\end{bmatrix} ~.
\end{align}
$\braket{X}_{\sysref} $ and $\vec{x}$ can now be used in 
Eq.~\eqref{eq:ThermoOpFromVec} to construct
the thermodynamic operator of interest: 
$\genop = \braket{X}_{\sysref} I +  \vec{x} \cdot \vec{\Gamma} / \constname $.
This 
constitutes a type of `operator tomography',
which is distinct but reminiscent of both quantum-state tomography and process tomography.

We are now equipped to identify and construct
the thermodynamically ideal inputs to any device.

\section{Ideal inputs for type-I objectives}
\label{sec:IdealType1Inputs}

Recall
that a thermodynamic quantity is `type-I' 
when 
its expectation value can be expressed as a linear functional of the initial state $\tr(\rho_0 \genop )$.
It is often desirable to minimize or maximize
type-I quantities,
e.g., to minimize heat or maximize work extraction.

An immediate observation can be drawn from 
Eq.~\eqref{eq:GenopTraceFormula} for any finite-dimensional quantum system,
by considering the spectral decomposition of the bounded operator 
$\genop = \sum_{n =1}^d \lambda_n \ket{v_n} \! \bra{v_n}$,
where $\Lambda_{\genop} = (\lambda_n)_n$
is the tuple of $\genop$'s eigenvalues
with corresponding eigenstates $V_{\genop}  = (\ket{v_n})_n$.

\begin{thm}
	\label{thm:Type1ExtremaArePure}
There is always a pure-state input that extremizes the expectation value
of type-I quantities $\braket{X}_{\rho_0} = \tr(\rho_0 \genop)$.
This pure state corresponds to an eigenstate of the thermodynamic operator $\genop$
with extremal eigenvalue.
\end{thm}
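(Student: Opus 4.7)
The plan is to combine two ingredients: convexity of the set of density matrices and the Rayleigh--Ritz variational principle for Hermitian operators. The functional $\rho_0 \mapsto \tr(\rho_0 \genop)$ is linear (hence both convex and concave), so its extrema over the compact convex set of density matrices must be attained at extreme points of that set, which are precisely the rank-one projectors, i.e.\ pure states. The problem then reduces to extremizing $\bra{\psi}\genop\ket{\psi}$ over unit vectors $\ket{\psi}$, which is exactly the classical variational problem solved by the extremal eigenpairs of the Hermitian operator $\genop$.

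More concretely, first I would decompose any candidate density matrix as $\rho_0 = \sum_k p_k \ket{\psi_k}\!\bra{\psi_k}$ with $p_k \geq 0$ and $\sum_k p_k = 1$, and use linearity of the trace to write
\begin{align}
\tr(\rho_0 \genop) = \sum_k p_k \bra{\psi_k} \genop \ket{\psi_k} .
\end{align}
Since the right-hand side is a convex combination of the pure-state expectation values, it is bounded above by $\max_k \bra{\psi_k}\genop\ket{\psi_k}$ and below by $\min_k \bra{\psi_k}\genop\ket{\psi_k}$. Therefore any supremum or infimum of $\tr(\rho_0 \genop)$ over density matrices is already attained by some pure input; no genuinely mixed state can do strictly better than the best pure state in its convex decomposition.

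Next I would invoke the spectral decomposition $\genop = \sum_{n=1}^d \lambda_n \ket{v_n}\!\bra{v_n}$ (which exists because $\genop$ is Hermitian) and apply Rayleigh--Ritz: for any unit vector $\ket{\psi} = \sum_n c_n \ket{v_n}$ we have $\bra{\psi}\genop\ket{\psi} = \sum_n |c_n|^2 \lambda_n$, again a convex combination, now of the eigenvalues, hence sandwiched between $\lambda_{\min}$ and $\lambda_{\max}$, with these bounds saturated precisely by the corresponding eigenstates. Choosing $\rho_0 = \ket{v_n}\!\bra{v_n}$ for an eigenstate with extremal eigenvalue therefore extremizes $\tr(\rho_0 \genop)$, establishing the theorem.

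I do not anticipate a substantive obstacle: the argument is a two-step reduction (mixed-to-pure via convexity, then pure via Rayleigh--Ritz), each step standard. The only subtlety worth flagging is \emph{uniqueness}---when the extremal eigenvalue of $\genop$ is degenerate, the optimum is attained on an entire eigensubspace, and indeed on all mixed states supported within that subspace, so the set of ideal inputs can be strictly larger than a single pure state. The theorem, however, only asserts the \emph{existence} of a pure optimizer, which the construction above manifestly provides; this also foreshadows the contrast with the type-II (nonlinear) objectives treated later, where convexity no longer forces the optimum to be pure.
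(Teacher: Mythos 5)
Your proposal is correct and follows essentially the same route as the paper, which simply invokes the spectral decomposition $\genop = \sum_n \lambda_n \ket{v_n}\!\bra{v_n}$ and the fact that $\tr(\rho_0\genop)$ is then a convex combination of the eigenvalues, extremized by placing all weight on an extremal eigenstate. Your two-step presentation (extreme points of the density-matrix set, then Rayleigh--Ritz) and the remark about degeneracy are fine elaborations but not a different argument.
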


For example:
there is always a pure-state input that minimizes heat;
there is always a pure-state input that maximizes heat;
there is always a pure-state input that minimizes entropy flow;
and so on.

The implications are important and somewhat surprising.
For instance, 
consider the work extracted $W_\text{extracted} = - W$
during a 
finite-time cyclic work-extraction protocol from a finite-dimensional quantum system.
Thm.~\ref{thm:Type1ExtremaArePure} asserts that  
maximal work can be extracted from the
pure-state input $\ket{w_\text{min}} \in \argmin_{\ket{v} \in V_{\mathcal{W}} } \bigl\{ \braket{ v | \mathcal{W} | v } \bigr\}$. \footnote{This initial state $\rho_0 = \ket{w_\text{min}} \bra{w_\text{min}}$
yields the minimal expected work, and thus the maximal expected work extraction value $-\braket{w_\text{min} | \mathcal{W} | w_\text{min}} = \max \{ -\lambda : \lambda \in \Lambda_\mathcal{W} \}$.}
When the minimal-eigenvalue eigenspace of $\mathcal{W}$ is degenerate, maximal work extraction can be achieved by both pure and mixed inputs.
In contrast, as we will see later, the state that minimizes entropy production is generically a mixed state.  Therefore,
\emph{pure-state inputs that maximize work extraction imply non-minimal entropy production.}
Similarly,
\emph{pure-state inputs that minimize heat  imply non-minimal entropy production.}

One broad lesson from this analysis
is that maximizing work extraction, minimizing heat, minimizing entropy production, etc., are all distinct concepts that should not be conflated.
There is a great diversity in optimality
among different thermodynamic goals.
This will be emphasized again in our first example in Sec.~\ref{sec:Examples}.

It is worth noting that the smallest and largest eigenvalues of each thermodynamic operator
demarcate the range of corresponding expectation values 
that can be achieved via
alternative inputs:
\begin{align}
	\braket{X}_{\rho_0} \in [\min(\Lambda_{\genop}) , \,  \max(\Lambda_{\genop})] ~.
\end{align}	
All values in this continuous range are achievable by some input.

\section{Ideal inputs for type-II objectives}
\label{sec:IdealType2Inputs}

Not all thermodynamic expectation values can be expressed like Eq.~\eqref{eq:GenopTraceFormula} as a linear functional of the initial state.
In particular, 
entropy production, 
reduction in nonequilibrium free energy, 
and change in entropy
all have expectation values that are nonlinear functions of the initial state.
However, in each of these three cases, the nonlinearity is of the same form, since it derives from the change
of von Neumann entropy $S(\rho) = -\tr(\rho \ln \rho)$.  Accordingly, the initial states that extremize this second class of thermodynamic quantities all share similar features.

The expectation value of entropy production, 
$\braket{\EP}_{\rho_0} = \braket{\EF}_{\rho_0} + \kB \Delta S(\rho_t)$,
plays a central role in nonequilibrium thermodynamics.
When the environment begins in local equilibrium, and is uncorrelated with both itself and the system, then the famous Second Law of thermodynamics is valid:
$\braket{\EP}_{\rho_0} \geq 0$.
In the appropriate circumstances,
entropy production alternatively can be expressed
as the work performed beyond the change in nonequilibrium free energy
$T \braket{\EP}_{\rho_0} = \braket{W}_{\rho_0} - \Delta \mathcal{F}_t$,
where $T$ is the initial temperature of the environment.

The expectation values of
(i) entropy production,
(ii) change in free energy, and (iii) change in entropy 
are each proportional to
\begin{align}
f_{\rho_0}^{(\genop)}
\coloneqq  \tr(\rho_0 \genop) + S(\rho_\tau) - S(\rho_0) 
\label{eq:type2}
\end{align}
for the appropriate linear operators $\genop$.
We will say that a thermodynamic quantity is `type-II' 
when 
its expectation value is proportional to 
$f_{\rho_0}^{(\genop)}$ for some linear operator $\genop$ and some positive constant of proportionality.
In particular,
\begin{align}
\braket{\EP}_{\rho_0} = \kB f_{\rho_0}^{(\EFO / \kB)}
\quad
\text{and} 
\quad
\Delta S_t = f_{\rho_0}^{(0)}
\quad
\text{and} 
\quad
-\Delta \mathcal{F}_t = \kB T f_{\rho_0}^{( -(\mathcal{Q} + \mathcal{W} ) / \kB T)} ~.
\end{align}

\subsection{General processes}
\label{sec:GradientDescentAlgo}

\subsubsection{Inputs that minimize type-II functions}

\begin{thm}
	\label{thm:Type2LocalMinimizerIsGlobal}
	An initial state that locally minimizes a type-II quantity also globally minimizes it.
\end{thm}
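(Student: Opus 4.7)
The plan is to show that $f_{\rho_0}^{(\genop)}$, viewed as a function of $\rho_0$ on the convex set of density matrices, is convex. The theorem then follows from the elementary convex-analysis fact that a local minimizer of a convex function on a convex domain is automatically a global minimizer; each type-II quantity inherits this property because it is a positive scalar multiple of $f_{\rho_0}^{(\genop)}$ for an appropriate linear operator $\genop$.

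First I would split $f_{\rho_0}^{(\genop)} = \tr(\rho_0 \genop) + [S(\rho_\tau) - S(\rho_0)]$, where $\rho_0 \mapsto \rho_\tau$ is the linear CPTP map induced by the device. The first term is affine in $\rho_0$, hence trivially convex. The substantive work is in showing the bracketed entropy-excess piece $g(\rho_0) \coloneqq S(\rho_\tau) - S(\rho_0)$ is convex.

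For convexity of $g$, I would take an arbitrary two-point mixture $\rho_0 = p\sigma_1 + (1-p)\sigma_2$ with $p \in [0,1]$, using linearity of the CPTP map to get $\rho_\tau = p\sigma_1^{(\tau)} + (1-p)\sigma_2^{(\tau)}$. Expanding $g(\rho_0) \leq p g(\sigma_1) + (1-p) g(\sigma_2)$ and regrouping terms, the desired inequality is exactly the statement that the Holevo information of the input ensemble $\{(p,\sigma_1),(1-p,\sigma_2)\}$ dominates that of the output ensemble $\{(p,\sigma_1^{(\tau)}),(1-p,\sigma_2^{(\tau)})\}$, i.e., monotonicity of Holevo information under CPTP maps --- a standard consequence of the data-processing inequality for quantum relative entropy. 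Adding back the affine $\tr(\rho_0 \genop)$ term preserves convexity.

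Finally, I would close with the standard chord argument: if $\rho_0^\star$ were a local minimizer but some density matrix $\rho_0'$ achieved $f(\rho_0') < f(\rho_0^\star)$, then along $\rho(t) = (1-t)\rho_0^\star + t\rho_0'$ convexity gives $f(\rho(t)) \leq (1-t) f(\rho_0^\star) + t f(\rho_0') < f(\rho_0^\star)$ for every $t \in (0,1]$; choosing $t$ small enough that $\rho(t)$ lies in any prescribed neighborhood of $\rho_0^\star$ contradicts local minimality. I expect the main subtlety, rather than a genuine obstacle, to be that pure and other rank-deficient states lie on the boundary of the state space where $S$ is non-smooth --- but the chord argument uses no differentiability and thus treats interior and boundary local minima uniformly.
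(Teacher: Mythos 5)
Your proof is correct and follows essentially the same route as the paper, which justifies the theorem in one line by asserting that type-II quantities are convex in the initial state. You have simply supplied the details the paper omits: the reduction of convexity of $S(\rho_\tau)-S(\rho_0)$ to monotonicity of Holevo information under the CPTP map $\rho_0\mapsto\rho_\tau$ (equivalently, the data-processing inequality, which is also the content of the mismatch theorem the paper cites), followed by the standard chord argument for local-implies-global minimality on a convex set.
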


This is because
type-II quantities are convex in the initial state.

For general processes, the nonlinearity of type-II objectives makes it difficult to find a closed-form expression for ideal inputs $\argmin_{\rho_0} f_{\rho_0}^{(\genop)} $.
Nevertheless, since type-II quantities are convex in the initial state, 
any number of simple algorithms, including gradient descent and related variations, are guaranteed to converge to the ideal input upon iteration.
By convexity, a local minimum in $f_{\rho_0}^{(\genop)} $, when minimizing over the set of density matrices,
is also the global minimum.

However, constrained optimization---restricting to the set of density matrices in this case---is non-trivial.
Although some techniques for gradient descent on the manifold of density matrices have been developed---see, e.g., Refs.~\cite{Gonca16, Banc20convex} and references therein---we
found a more direct solution
to our problem, which we provide in this section.  
Our resulting algorithm for gradient descent 
provides a 
quantum generalization of the Frank--Wolfe algorithm~\cite{Jagg13}, with
a simple 
analytically solvable direction for descent at each step.

Using techniques introduced in Ref.~\cite{Riec21_Initial}, 
we can analytically calculate the gradient of type-II expectation values
around any initial state, given any parametrization of the state space.
If we consider arbitrary infinitesimal changes in the initial generalized Bloch vector, then the partial derivative of $f_{\rho_0}^{(\genop)}$, with respect to each Bloch-vector component, can be expressed as
\begin{align}
	\frac{\partial}{\partial {b_0}_m} f_{\rho_0}^{(\genop)} = \tr(\Gamma_m \genop) + \tr(\Gamma_m \ln \rho_0)  - \tr \Bigl\{ \tr_\text{env} \bigl[ U ( \Gamma_m \otimes \rho_0^\text{env}) U^\dagger \bigr] \ln \rho_\tau \Bigr\} ~.
\end{align}
This follows via an adaptation of the derivation that led to Eqs.~(C16) and (O4)
in Ref.~\cite{Riec21_Initial}.

It is useful to notice how the elements of an arbitrary matrix basis can be expressed as a linear combination of the $d^2$ test inputs:
\begin{align}
	\Gamma_m = \sum_{n=1}^{d^2} (B^{-1})_{m+1, n} \, \rho_0^{(n)} ~,
\end{align}
where we have included the reference state as $\Gamma_0 = \sysref$.
We can likewise express 
the time-evolution of these matrices, 
as the same linear combination of the time-evolved test inputs:
\begin{align}
	\Gamma_m' \coloneqq
	\tr_\text{env} \bigl[ U ( \Gamma_m \otimes \rho_0^\text{env}) U^\dagger \bigr] = \sum_{n=1}^{d^2} (B^{-1})_{m+1, n} \, \rho_\tau^{(n)} ~.
	\label{eq:Gprime}
\end{align}
Partial derivatives of  $f_{\rho_0}^{(\genop)}$
can thus be calculated via 
\begin{align}
	\frac{\partial}{\partial {b_0}_m} f_{\rho_0}^{(\genop)} = \tr(\Gamma_m \genop) + \tr(\Gamma_m \ln \rho_0)  - 
	\tr(\Gamma_m' \ln \rho_\tau)
~,
	\label{eq:SimplePartialCalc}
\end{align}
where $\rho_\tau = \Gamma_0' +  \tfrac{1}{\constname} \sum_{n=1}^{d^2-1} \tr(\rho_0 \Gamma_n) \, \Gamma_n' $.

A gradient can be constructed as
the Hermitian operator
\begin{align}
	\vec{\nabla} f_{\rho_0}^{(\genop)} 
	= \sum_{m=1}^{d^2-1} \Gamma_m \,
	  \frac{\partial}{\partial {b_0}_m} f_{\rho_0}^{(\genop)} ~.
\end{align}
However, extra care must be taken to stay along the manifold of valid density matrices.
To achieve this, one can use the directional derivative
\begin{align}
\frac{\rho_0' - \rho_0}{ \| \rho_0' - \rho_0 \|} \cdot \vec{\nabla} f_{\rho_0}^{(\genop)} =
	\tr \bigl[ (\rho_0' - \rho_0)  \vec{\nabla} f_{\rho_0}^{(\genop)} \bigr] /  \| \rho_0' - \rho_0 \| 
	~,
\end{align}	
which is the linear change in $f_{\rho_0}^{(\genop)}$ at $\rho_0$
when moving in the direction of $\rho_0'$.
Since density matrices are a convex set,
a change in this direction is guaranteed to move along the manifold of density matrices.
For simplicity, $\| \cdot \|$ can be chosen to be the trace norm.

Recall that, since $f_{\rho_0}^{(\genop)}$ is convex over initial states,
descent 
always benefits its global minimization;
there are no non-global local minima to get stuck in.
If $\rho_0$ does not minimize $f_{\rho_0}^{(\genop)}$,
then we can find directions (along the manifold of density matrices)
with negative slope, and these directions will lead towards the minimizer.
Note that moving infinitesimally from $\rho_0$ towards $\rho_0'$ will reduce $f_{\rho_0}^{(\genop)}$
whenever 
$	\tr \bigl[ (\rho_0' - \rho_0)  \vec{\nabla} f_{\rho_0}^{(\genop)} \bigr] < 0$.
Accordingly, 
we can always find a valid direction of descent by identifying the 
$\rho_0'$ that minimizes
$	\tr \bigl[ (\rho_0' - \rho_0)  \vec{\nabla} f_{\rho_0}^{(\genop)} \bigr] $.
Fortunately, this desired $\rho_0'$---call it $\sigma_\text{min}$---can be found explicitly and analytically since
\begin{align}
\sigma_{\min}  & := 	
\argmin_{\rho_0'}
	\tr \bigl[ (\rho_0' - \rho_0)  \vec{\nabla} f_{\rho_0}^{(\genop)} \bigr]  \\
&=
\argmin_{\rho_0'}
\tr \bigl( \rho_0'  \vec{\nabla} f_{\rho_0}^{(\genop)} \bigr) \\
&= 
\frac{\ket{\xi} \bra{\xi}}{ \braket{\xi | \xi} }
~,
\end{align}	
where 
\begin{align}
\ket{\xi} = \argmin_{\ket{\lambda}} \braket{ \lambda | \vec{\nabla} f_{\rho_0}^{(\genop)}  | \lambda}
\end{align}
is the minimal-eigenvalue eigenstate of the Hermitian operator 
$ \vec{\nabla} f_{\rho_0}^{(\genop)} $.

Putting this all together, we can now propose a simple descent 
method for finding the optimal quantum state:

\begin{shaded}

\vspace{0.3em}

\Ovalbox{ \qquad \qquad \qquad \qquad \qquad \qquad \qquad \qquad \textbf{Algorithm to obtain} 
$\argmin_{\rho_0} f_{\rho_0}^{(\genop)}  \qquad \qquad \qquad \qquad \qquad \qquad \qquad \qquad \quad$}

\vspace{0.4em}

From $d^2$ linearly independent test inputs, 
record the initial Bloch matrix $B$, 
construct the thermodynamic operator $\mathcal{X}$,
and record the time-evolved test states $(\rho_\tau^{(n)})_{n=1}^{d^2}$
to obtain $(\Gamma'_{n})_{n=0}^{d^2-1}$.

Choose an arbitrary initial density matrix $\rho_{0}$, 
which will be updated iteratively.

\vspace{0.5em}

At each iterative step $k$:
\begin{enumerate}
\item
Calculate the gradient 
$ \vec{\nabla} f_{\rho_0}^{(\genop)} $  
via Eq.
\eqref{eq:SimplePartialCalc}.
\item 
Determine 
the descent direction $\hat{n} = \frac{ \sigma_{\min}  - \rho_0}{ \| \sigma_{\min}  - \rho_0\|}$, 
with $\sigma_{\min}  = \frac{\ket{\xi} \bra{\xi} }{ \braket{\xi | \xi} } $,
from the minimal-eigenvalue eigenstate $\ket{\xi}$ of the Hermitian operator 
$ \vec{\nabla} f_{\rho_0}^{(\genop)} $.
\item 
Update the initial state $\rho_0$ to 
approach $\argmin_{\rho_0} f_{\rho_0}^{(\genop)} $,
according to:
\begin{align}
	\rho_0 \mapsto \rho_0 - a_k \hat{n} \,
	\tr \bigl( \hat{n} \, \vec{\nabla} f_{\rho_0}^{(\genop)} \bigr)
	~,
\end{align}
where $a_k$ is a small positive value that 
diminishes
with large $k$.
\end{enumerate}
\end{shaded}

Notice that $\hat{n}$ is a traceless operator.
We take $a_k = \tfrac{2}{k+2}$, as suggested by the Frank--Wolfe algorithm~\cite{Jagg13}.
In the limit of many iterations, 
the algorithm converges then towards 
its unique fixed point $\alpha_0
= \argmin_{\rho_0} f_{\rho_0}^{(\genop)} $
with error on the order of 
$\mathcal{O}(1/k)$.

The minimizing input is generically a mixed state with full support.
In these cases, 
success in the type-II minimization can be verified through the 
mismatch theorem:
\begin{align}
	 f_{\rho_0}^{(\genop)} - f_{\alpha_0}^{(\genop)} = 
	 \text{D}[\rho_0 \| \alpha_0] - \text{D}[\rho_\tau \| \alpha_\tau] 
	 ~,
\end{align}
where 
$\text{D}[\rho \| \alpha] = \tr(\rho \ln \rho) - \tr(\rho \ln \alpha)$ is the quantum relative entropy, 
and $\alpha_0 = \argmin_{\rho_0} f_{\rho_0}^{(\genop)} $
is the ideal input~\cite{Riec21_Initial, Kolc21_State}.

\subsubsection{Inputs that maximize type-II functions}

\begin{thm}	
	\label{thm:Type2MaxIsPure}
	A pure state input maximizes a type-II quantity.
\end{thm}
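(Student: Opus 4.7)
The plan is to invoke the Bauer maximum principle: a continuous convex function on a compact convex subset of a real topological vector space attains its supremum at an extreme point of the set. Three ingredients are needed, and each is essentially already in hand.

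First, I would verify the domain side. The set $\mathcal{D}(\mathcal{H})$ of density matrices on the $d$-dimensional system is a compact convex subset of the real vector space of Hermitian operators, and its extreme points are precisely the pure states. The latter follows from the spectral theorem: any density matrix of rank $\geq 2$ decomposes as a non-trivial convex combination $\rho_0 = \sum_i p_i \ket{v_i}\bra{v_i}$ via its eigendecomposition, so it is not extreme; conversely, a rank-one projector cannot be written as a non-trivial convex combination of density matrices, because the support would have to be contained in a one-dimensional subspace.

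Second, I would verify the function side. Convexity of $f_{\rho_0}^{(\genop)}$ in $\rho_0$ is already asserted in Thm.~\ref{thm:Type2LocalMinimizerIsGlobal} and is the single nontrivial input to the argument; continuity in $\rho_0$ is immediate because $\tr(\rho_0 \genop)$ is linear and the von Neumann entropies $S(\rho_0)$ and $S(\rho_\tau) = S(\Lambda(\rho_0))$ (with $\Lambda$ the induced CPTP map on the system) are continuous on $\mathcal{D}(\mathcal{H})$.

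Third, I would apply Bauer's principle to conclude that $\max_{\rho_0 \in \mathcal{D}(\mathcal{H})} f_{\rho_0}^{(\genop)}$ is realized at an extreme point of $\mathcal{D}(\mathcal{H})$, hence at a pure state. In fact a short self-contained argument avoids invoking Bauer by name: for any $\rho_0 \in \mathcal{D}(\mathcal{H})$ with spectral decomposition $\rho_0 = \sum_i p_i \ket{v_i}\bra{v_i}$, convexity of $f$ gives
\begin{align}
f_{\rho_0}^{(\genop)} \;\leq\; \sum_i p_i \, f_{\ket{v_i}\bra{v_i}}^{(\genop)} \;\leq\; \max_i f_{\ket{v_i}\bra{v_i}}^{(\genop)} ~,
\end{align}
so some pure state in the eigenbasis of $\rho_0$ already dominates $\rho_0$. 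Taking the supremum over $\rho_0$ shows the global maximum is attained on a pure state.

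The only real obstacle is ensuring that the convexity of $f_{\rho_0}^{(\genop)}$ invoked in Thm.~\ref{thm:Type2LocalMinimizerIsGlobal} is genuinely available here; it is non-trivial because $S(\rho_\tau) - S(\rho_0)$ is the difference of two concave functions of $\rho_0$. I would justify it (if not already done) by writing $f_{\rho_0}^{(\genop)}$ in the data-processing form suggested by the mismatch theorem, $f_{\rho_0}^{(\genop)} = f_{\alpha_0}^{(\genop)} + D[\rho_0 \| \alpha_0] - D[\rho_\tau \| \alpha_\tau]$, and appealing to joint convexity of the quantum relative entropy together with linearity of the CPTP map in its argument. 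Once convexity is secured, the theorem follows immediately from the extreme-point argument above, and the maximizer is non-unique exactly when the corresponding eigenspace is degenerate or when $f$ is constant along some convex combination of pure states.
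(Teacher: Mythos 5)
Your main argument is correct and follows essentially the same route as the paper, which disposes of this theorem in one sentence: type-II quantities are convex in the initial state, and a convex function on the convex set of density matrices attains its maximum at a pure state. You are in fact more careful than the paper, which locates the maximum ``on the boundary (the pure states)''---for $d>2$ the boundary of state space also contains non-pure, rank-deficient states, and the correct statement is the one you give, that the \emph{extreme points} are the pure states; your elementary chain $f_{\rho_0}^{(\genop)} \le \sum_i p_i\, f_{\ket{v_i}\bra{v_i}}^{(\genop)} \le \max_i f_{\ket{v_i}\bra{v_i}}^{(\genop)}$ over the eigendecomposition settles it cleanly without invoking Bauer by name.

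One caution about your closing remark: deriving convexity from the mismatch identity plus joint convexity of the relative entropy does not go through as stated. With $\alpha_0$ fixed, $D[\rho_0\|\alpha_0]$ is convex in $\rho_0$, but $-D[\rho_\tau\|\alpha_\tau] = -D[\Lambda(\rho_0)\|\Lambda(\alpha_0)]$ is \emph{concave} in $\rho_0$ (minus a convex function composed with the linear map $\Lambda$), so the identity expresses $f_{\rho_0}^{(\genop)}$ as a convex plus a concave term and yields nothing. The standard proof instead applies the Holevo identity $S\bigl(\sum_i p_i\sigma_i\bigr) - \sum_i p_i S(\sigma_i) = \sum_i p_i D\bigl[\sigma_i \,\big\|\, \sum_j p_j\sigma_j\bigr]$ at times $0$ and $\tau$ and then uses the data-processing inequality $D[\Lambda(\rho_i)\|\Lambda(\rho)] \le D[\rho_i\|\rho]$ to conclude $f_{\bar\rho}^{(\genop)} \le \sum_i p_i f_{\rho_i}^{(\genop)}$. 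Since the paper itself simply asserts convexity (as in Thm.~\ref{thm:Type2LocalMinimizerIsGlobal}, citing prior work), relying on that assertion as your primary justification puts you on the same footing as the paper, and your theorem stands.
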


This is because
type-II quantities are convex functions of the initial state, 
and the maximum of a convex function over a convex set (set of mixed states)
is achieved on the boundary of the convex set (the pure states).

Local maximization can be achieved by a slight adaptation of the above algorithm, by ascending rather than descending the gradient.  The only difference is that the ascent direction is obtained via the \emph{maximal}-eigenvalue eigenstate of the Hermitian operator $ \vec{\nabla} f_{\rho_0}^{(\genop)}$, whereas gradient descent used its minimal-eigenvalue eigenstate.

However, local maxima of type-II quantities are not necessarily global maxima.  In our simulations, we found the global maxima by seeding our algorithm with $\argmax_{\rho_0} \tr(\rho_0 \mathcal{X})$ 
which was found via spectral decomposition of the thermodynamic operator $\mathcal{X}$.

For completeness and clarity, 
we provide this ascent algorithm explicitly:

\begin{shaded}
	
	\vspace{0.3em}
	
\Ovalbox{ \qquad \qquad \qquad \qquad \qquad \qquad \qquad \qquad \textbf{Algorithm to obtain} 
$\argmax_{\rho_0} f_{\rho_0}^{(\genop)}  \qquad \qquad \qquad \qquad \qquad \qquad \qquad \qquad \quad$}
	
	\vspace{0.4em}

From $d^2$ linearly independent test inputs, 
record the initial Bloch matrix $B$, 
construct the thermodynamic operator $\mathcal{X}$,
and record the time-evolved test states $(\rho_\tau^{(n)})_{n=1}^{d^2}$
to obtain $(\Gamma'_{n})_{n=0}^{d^2-1}$.

Choose
the initial density matrix to be  $\rho_{0} = \argmax_{\rho_0} \tr(\rho_0 \mathcal{X})$, 
which will be updated iteratively.

\vspace{0.5em}

At each iterative step $k$:
\begin{enumerate}
	\item
	Calculate the gradient 
	$ \vec{\nabla} f_{\rho_0}^{(\genop)} $  
via Eq.~\eqref{eq:SimplePartialCalc}.
	\item 
	Determine 
	the ascent direction $\hat{n} = \frac{ \sigma_{\max} - \rho_0}{ \| \sigma_{\max}  - \rho_0\|}$, 
	with $\sigma_{\max}  = \frac{\ket{\psi} \bra{\psi} }{ \braket{\psi | \psi} } $,
from the maximal-eigenvalue eigenstate $\ket{\psi}$ of the Hermitian operator 
	$ \vec{\nabla} f_{\rho_0}^{(\genop)} $.
	\item 
	Update the initial state $\rho_0$ to 
	approach $\argmax_{\rho_0} f_{\rho_0}^{(\genop)} $,
	according to:
	\begin{align}
		\rho_0 \mapsto \rho_0 + a_k \hat{n} \,
		\tr \bigl( \hat{n} \, \vec{\nabla} f_{\rho_0}^{(\genop)} \bigr)
		~,
	\end{align}
	where $a_k$ is a small positive value that diminishes
	with large $k$.
\end{enumerate}
In the limit of many iterations, 
the algorithm converges towards 
a local maximum of
$f_{\rho_0}^{(\genop)} $.

\end{shaded}

\subsection{Overwriting processes} 
\emph{Overwriting processes}---processes that overwrite the physical state of the system---are an important class of processes including 
memory reset,
state preparation, 
work extraction,
and processes that lead to either equilibrium or nonequilibrium steady states.
We find that their thermodynamically ideal inputs can all be found directly and analytically.

For reliable overwriting processes, 
for which the final state $r_\tau$ is very nearly independent of the input,
$S(\rho_\tau) = S(r_\tau)$ will effectively be a constant.
In this case,
as shown in   App.~\ref{sec:OperatorExpressionForType2},
$f_{\rho_0}^{(\genop)}$ can be expressed as
$f_{\rho_0}^{(\genop)} = 
\text{D}[\rho_0 \| \omega^{(\genop)}] - \ln[ \tr(e^{-\genop}) ] + S(r_\tau)$ 
where $\omega^{(\genop)} \coloneqq e^{-\genop} / \tr(e^{-\genop})$.
Thus, for reliable overwriting processes,
it is clear that 
$\omega^{(\genop)}$
uniquely minimizes $f_{\rho_0}^{(\genop)}$.
It is worth noting that, 
if $\genop$ is a bounded finite-dimensional operator, 
$\omega^{(\genop)} $ has full rank.
Physically, this tells us that 
$\omega^{(\genop)} $
is a non-pure mixed state.

\begin{thm}
\label{thm:Type2MinimizerForReset}
	For any overwriting process, 
	the expectation value of any type-II quantity 
	$\braket{X}_{\rho_0} = p f_{\rho_0}^{(\genop)}
	= p[\tr(\rho_0 \genop) + S(\rho_\tau) - S(\rho_0) ]$ 
	is uniquely minimized by
	the mixed-state input
\begin{align}
\omega^{(\genop)} = e^{-\genop} / \tr(e^{-\genop}) ~.
\end{align}
The corresponding minimal value is
$\braket{X}_{\omega^{(\genop)}} = p S(r_\tau) - p  \ln[ \tr (e^{-\genop}) ] $,
where $r_\tau$ is the input-independent final state of the overwriting process.
\end{thm}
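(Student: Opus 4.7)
The plan is to reduce $f_{\rho_0}^{(\genop)}$ to a quantum relative entropy plus a $\rho_0$-independent remainder, and then invoke Klein's inequality. Substituting the definition $\omega^{(\genop)} = e^{-\genop}/\tr(e^{-\genop})$ into the quantum relative entropy yields
\begin{align}
\text{D}[\rho_0 \| \omega^{(\genop)}]
&= \tr(\rho_0 \ln \rho_0) - \tr\bigl(\rho_0 \ln \omega^{(\genop)}\bigr) \\
&= -S(\rho_0) + \tr(\rho_0 \genop) + \ln \tr(e^{-\genop}) ~,
\end{align}
where I used $\ln \omega^{(\genop)} = -\genop - \ln \tr(e^{-\genop})\,I$ and $\tr(\rho_0) = 1$. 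Rearranging gives the key algebraic identity
\begin{align}
f_{\rho_0}^{(\genop)}
= \tr(\rho_0 \genop) + S(\rho_\tau) - S(\rho_0)
= \text{D}[\rho_0 \| \omega^{(\genop)}] - \ln \tr(e^{-\genop}) + S(\rho_\tau) ~,
\end{align}
which is what the excerpt claims is shown in App.~\ref{sec:OperatorExpressionForType2}.

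Next, I would invoke the overwriting hypothesis. Because the process reliably overwrites the system state, $\rho_\tau = r_\tau$ regardless of $\rho_0$, so $S(\rho_\tau) = S(r_\tau)$ is a constant, and the entire $\rho_0$-dependence of $f_{\rho_0}^{(\genop)}$ is concentrated in the single term $\text{D}[\rho_0 \| \omega^{(\genop)}]$. By Klein's inequality, $\text{D}[\rho_0 \| \omega^{(\genop)}] \geq 0$ with equality if and only if $\rho_0 = \omega^{(\genop)}$, which is a legitimate density matrix since $e^{-\genop}$ is positive and of unit trace after normalization. Multiplying by the positive proportionality constant $p$ preserves both the minimizer and the strict uniqueness.

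Substituting $\rho_0 = \omega^{(\genop)}$ back into the identity yields
\begin{align}
\braket{X}_{\omega^{(\genop)}} = p f_{\omega^{(\genop)}}^{(\genop)} = p S(r_\tau) - p \ln \tr(e^{-\genop}) ~,
\end{align}
as stated. The only subtlety worth flagging is the precise meaning of the overwriting assumption: the theorem should be understood as holding exactly when the final state is strictly input-independent, and as holding approximately (with error controlled by the variation of $S(\rho_\tau)$ over inputs) when $r_\tau$ is only nearly input-independent. There is no substantive obstacle in this proof beyond clean bookkeeping of the Klein-inequality step and a note that $\omega^{(\genop)}$ is full-rank whenever $\genop$ is bounded, so the minimizer is a properly mixed state, consistent with the generic non-purity asserted in the preceding discussion.
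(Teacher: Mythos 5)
Your proposal is correct and follows essentially the same route as the paper: App.~\ref{sec:OperatorExpressionForType2} performs exactly this rewriting of $f_{\rho_0}^{(\genop)}$ as $\text{D}[\rho_0 \| \omega^{(\genop)}] - \ln[\tr(e^{-\genop})] + S(\rho_\tau)$, and the theorem then follows from the input-independence of $S(\rho_\tau)$ and the non-negativity of relative entropy. Your explicit appeal to Klein's inequality for uniqueness, and your remark on the exact-versus-approximate reading of the overwriting hypothesis, only make explicit what the paper leaves implicit.
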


Recall that the thermodynamic operator can be expressed 
as 
$\genop = 
\braket{X}_{\sysref} I +  \vec{x} \cdot \vec{\Gamma} / \constname$.
In terms of the thermodynamic vector $\vec{x}$, we find that 
\begin{align}
	\omega^{(\genop)} = e^{-  \vec{x} \cdot \vec{\Gamma} / \constname } / \tr(e^{- \vec{x} \cdot \vec{\Gamma} / \constname }) ~.
\end{align}

For a qubit in the standard Pauli-matrix basis $\vec{\Gamma} = \vec{\sigma}/2$,
this further reduces to
\begin{align}
	\omega^{(\genop)}
&= e^{-\vec{x} \cdot \vec{\sigma}} / \tr(e^{-\vec{x} \cdot \vec{\sigma}}) 
	\\
	&
	= 
	I/2 - \frac{\hat{x} \cdot \vec{\sigma}}{2} \tanh x
	~.
	\label{eq:MinDissQubit}
\end{align}
Intuitively, Eq.~\eqref{eq:MinDissQubit} tells us that the 
minimizing
Bloch vector $\vec{a}^* = - \tanh(x) \hat{x}$ points in the opposite direction of
the thermodynamic vector $\vec{x}$ to reduce entropy flow or promote energy gain,
as the case may be. 
However, this tendency to reduce heat or increase energy 
is balanced against the entropy gain incurred when
tarnishing a pure state. 
As the magnitude of the thermodynamic vector $x$ grows
beyond unity, the minimizing Bloch vector converges
exponentially to the edge of the Bloch sphere.

For example,
for any reliable overwriting process operating on a qubit, 
this tells us that
the unique initial state leading to minimal entropy production is 
$I/2 - \frac{\hat{\varphi} \cdot \vec{\sigma}}{2} \tanh \varphi$, in agreement~\footnote{For comparison with the previous work, note that the entropy flow vector $\vec{\varphi}$ as defined here is scaled by a factor of 1/2 relative to the $\vec{\phi}$ of Ref.~\cite{Riec21_Impossibility}: $\vec{\varphi} = \vec{\phi}/2$.  This rescaling has a number of aesthetic benefits, and leads more naturally to our high-dimensional generalization.} 
with 
Ref.~\cite[Eq.~(25)]{Riec21_Impossibility} obtained by different means.
Whereas 
Ref.~\cite{Riec21_Impossibility} identified the initial state of a qubit 
that would lead to minimal entropy production
during any qubit-reset process, the current result
provides a significant generalization.
We now identify
the ideal input to any overwriting process in any finite dimension for any type-II objective,
including maximizing free-energy gain.

While the initial states leading to 
minimal entropy production or maximal gain in free energy
are non-trivial mixed states,
the input leading to the largest reduction in entropy is always the same for any overwriting process.
Notice that the largest reduction in entropy is achieved by the fully mixed input state
$\omega^{(0)} = I/d$
for any overwriting process in any dimension $d$.

We have so far found the ideal inputs to \emph{minimize} the expectation values of type-II quantities during an overwriting process---for example, the inputs leading to minimal entropy production or maximal increase in nonequilibrium free energy.  
It is just as natural and important to ask: 
Which inputs \emph{maximize} the expectation values of type-II quantities, leading, e.g., to maximal entropy production or the biggest reduction in nonequilibrium free energy?

Since $S(\rho_\tau) = S(r_\tau)  $ is independent of the input to an overwriting process, maximizing  
$f_{\rho_0}^{(\genop)}
= \tr(\rho_0 \genop) + S(\rho_\tau) - S(\rho_0) $
asks us to simultaneously minimize $S(\rho_0)$
and maximize  $\tr(\rho_0 \genop) $.
Notice that $S(\rho_0)$ is minimized for any pure state,
whereas $\tr(\rho_0 \genop) $ is maximized by some pure state, according to Thm.~\ref{thm:Type1ExtremaArePure}.
Accordingly, the two objectives can be simultaneously satisfied
by identifying the pure state that maximizes $\tr(\rho_0 \genop) $.
We thus inherit an answer from Thm.~\ref{thm:Type1ExtremaArePure}, 
when seeking to maximize the expectation value of a type-II quantity.

\begin{thm}
\label{thm:ForReset_Type2MaxIsType1Max}	
	For any overwriting process, 
	the expectation value of any type-II quantity 
	$\braket{X}_{\rho_0} = p f_{\rho_0}^{(\genop)}
	= p[\tr(\rho_0 \genop) + S(\rho_\tau) - S(\rho_0) ]$ 
	is maximized by
	a pure-state input 
	that maximizes $\tr(\rho_0 \genop )$.
	This is satisfied by any eigenstate of $\genop$ with maximal eigenvalue.
\end{thm}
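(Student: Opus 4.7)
My plan is to reduce the type-II maximization to a linear maximization over pure states, which has already been handled by Thm.~\ref{thm:Type1ExtremaArePure}. The key observation, foreshadowed in the paragraph preceding the statement, is that for an overwriting process the final state satisfies $\rho_\tau = r_\tau$ independently of the input, so $S(\rho_\tau) = S(r_\tau)$ is a constant in $\rho_0$. Consequently, maximizing
\begin{align}
f_{\rho_0}^{(\genop)} = \tr(\rho_0 \genop) - S(\rho_0) + S(r_\tau)
\end{align}
over admissible inputs is equivalent to maximizing $g(\rho_0) \coloneqq \tr(\rho_0 \genop) - S(\rho_0)$; multiplying by the positive constant $p$ then preserves the maximizer.

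Next I would show that the two terms of $g$ can be jointly optimized on pure states without any trade-off. The von Neumann entropy satisfies $S(\rho_0) \geq 0$, with equality precisely on the pure states, so $-S(\rho_0)$ is maximized (to zero) on every pure state. Separately, by Thm.~\ref{thm:Type1ExtremaArePure} the linear functional $\tr(\rho_0 \genop)$ attains its global maximum $\lambda_\text{max}$---the largest eigenvalue of $\genop$---at any maximal-eigenvalue eigenstate $\ket{v_\text{max}}$ of $\genop$. Because $\ket{v_\text{max}}\bra{v_\text{max}}$ is itself pure, the two bounds are saturated simultaneously, giving $g(\ket{v_\text{max}}\bra{v_\text{max}}) = \lambda_\text{max}$.

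The only step I would flag as a potential obstacle is verifying that no mixed state can do strictly better, so the bound is tight and the argument is not circular. I would dispatch this by working in the eigenbasis of $\genop$: writing $\rho_0 = \sum_i p_i \ket{v_i}\bra{v_i}$ with eigenvalues $\lambda_i \leq \lambda_\text{max}$ yields $g(\rho_0) = \sum_i p_i \lambda_i - S(\rho_0) \leq \sum_i p_i \lambda_i \leq \lambda_\text{max}$, with the first inequality strict unless $\rho_0$ is pure and the second strict unless $\rho_0$ is supported in the maximal-eigenvalue eigenspace of $\genop$. This pins down the maximizing set as precisely the pure eigenstates of $\genop$ with eigenvalue $\lambda_\text{max}$, matching the statement and reusing the same optimizer identified in Thm.~\ref{thm:Type1ExtremaArePure}.
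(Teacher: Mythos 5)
Your proposal is correct and follows essentially the same route as the paper: since $S(\rho_\tau)=S(r_\tau)$ is input-independent for an overwriting process, the problem reduces to simultaneously maximizing $\tr(\rho_0\genop)$ and $-S(\rho_0)$, and both are saturated at once by a maximal-eigenvalue eigenstate of $\genop$, so the result is inherited from Thm.~\ref{thm:Type1ExtremaArePure}. One small caveat on your (redundant) tightness check: an arbitrary $\rho_0$ need not be diagonal in the eigenbasis of $\genop$, so the decomposition $\rho_0=\sum_i p_i\ket{v_i}\bra{v_i}$ is not generally valid---but the bound $g(\rho_0)\le\lambda_{\max}$ already follows from $\tr(\rho_0\genop)\le\lambda_{\max}$ together with $S(\rho_0)\ge 0$, so nothing is lost.
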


For example, for any overwriting process, 
the input that maximizes entropy production 
is the pure state that maximizes entropy flow, 
which is an eigenstate of the expected-entropy-flow operator 
$\EFO$ with maximal eigenvalue.
Similarly, the biggest reduction in nonequilibrium free energy will be achieved for any overwriting process from the pure-state
input that maximizes the reduction in energy,
which is an eigenstate of the change-of-energy operator $\mathcal{Q}+ \mathcal{W}$ with minimal eigenvalue.

\subsection{Perturbative correction to type-II minimizers}

Here we 
find a closed-form expression for the approximate minimizer of any type-II functional---e.g., the input state that minimizes entropy production or maximizes the change in nonequilibrium free energy---for any process,
via a second-order expansion of the type-II functional around
a reference state.  
This method becomes exact in the limit that
the true minimizer is a small perturbation from the reference state---e.g., in the case of small changes to a protocol with known minimizer.

To obtain our result, we 
minimize $f_{\rho_0}^{(\genop)} $
over
the space of 
valid
generalized Bloch vectors.
It is convenient to introduce the function $\tilde{\rho}$
with $\pbb =  I/d + \bb \cdot \vec{\Gamma} $
which maps Bloch vectors to their corresponding density matrices,
and the function $\tilde{\rho}'$ 
with $\pbbF =  \Gamma_0' + \vec{b} \cdot \vec{\Gamma}' $
which maps initial Bloch vectors (at time 0) to their corresponding time-evolved density matrices at time $\tau$.
To simplify notation,
we introduce the function $\tilde{f}$ such that $\tilde{f}(\vec{b}) =  f_{\pbb}^{(\genop)}$.

We seek
the optimal $\bb^{*} = \argmin_{\vec{b}} \, 
\tilde{f}(\bb)$
in the \emph{perturbative regime}, where
the optimal $\pbb[\bb^{*}]$ is a small perturbation
away from some reference initial state $\pbb[\pp]$. For instance,
$\tilde{f}$ may represent entropy production and $\pbb[\pp]$ may be an initial equilibrium state
such that $\pbb[\bb^{*}]\approx\pbb[\pp]$ in the limit of slow
time-dependent driving. 

Assume that the density matrix $\rho(\pp)$ is positive definite.
By continuity, for any $\vec{b}$
sufficiently close to $\pp$, $\pbb$ is also positive definite and therefore
a valid density matrix.~\footnote{Hermiticity and trace-one conditions are automatically satisfied by any generalized Bloch vector, so we only need to be careful about ensuring positive semi-definiteness of the induced matrix.} 
Now expand $\tilde{f}$ to second order in $\vec{\epsilon} \coloneqq  \bb-\pp$,
\begin{equation}
	\tilde{f}(\bb)\approx \tilde{f}(\pp) + \vec{\epsilon}^{\top}\vec{j}+\frac{1}{2}\vec{\epsilon}^{\top} \Hess \vec{\epsilon},\label{eq:taylor}
\end{equation}
where $\vec{j}$ is the gradient vector and $\Hess$ is the Hessian matrix of $\tilde{f}$ evaluated
at $\pp$. The elements of the gradient vector are given by
\begin{align}
	j_{n} & = \partial_{b_{n}} \tilde{f} \, \bigr|_{\pp} \nonumber \\
	& =\tr ( \Gamma_{n} \mathcal{X} ) + \tr \bigl[ \Gamma_{n} \ln \pbb[\pp] \bigl] - \tr \bigl[ \Gamma_{n}' \ln\pbbF[\pp] \bigr] ~. \label{eq:grad1}
\end{align}
The elements of the Hessian matrix
are $\Hess_{m,n} = (\partial_{b_{m}} \partial_{b_{n}} f ) |_{\vec{\pi}}$.   The Hessian follows by considering
the second derivatives of the von Neumann entropy,
which we calculate in App.~\ref{sec:SecondDerivativeOfEntropy}
to find 
\begin{equation}
\partial_{b_{m}}\partial_{b_{n}} S (\tilde{\rho}) 
\bigr|_{\pp}=
	- \sum_{k,\ell}\phi(\nu_{k},\nu_{\ell})\langle k \vert\Gamma_{n}\vert \ell \rangle\langle  \ell \vert\Gamma_{m}\vert k  \rangle
	~,
	\label{eq:d2}
\end{equation}
where we used the eigendecomposition $\pbb[\pp]=\sum_{k} \nu_{k} \vert k \rangle\langle k \vert$
and defined $\phi(a,b)\coloneqq \frac{\ln a-\ln b}{a-b}$ (with $\phi(a,a)=1 / a$
by continuity), which is the reciprocal of the logarithmic mean. 
We note 
that this second-order expansion
has some resemblance to
the Kubo--Mori--Bogoliubov metric
of quantum-information geometry~\cite{Petz93_Bogoliubov}. 
We derive
a similar expression for
$\partial_{b_{m}}\partial_{b_{n}}S(\tilde{\rho}')  $, which
gives the following form
for the matrix elements 
of the Hessian:
\begin{align}
	\Hess_{m,n} = & 
	\biggl[
	\sum_{k, \ell } \phi(\nu_{k},\nu_{\ell})\langle k \vert \Gamma_{n} \vert \ell \rangle\langle \ell \vert\Gamma_{m}\vert k \rangle 
	\biggr]
-
	\biggl[
	\sum_{k, \ell } \phi(\nu_{k}',\nu_{\ell}')\langle k'\vert\Gamma_{n}'\vert \ell'\rangle\langle \ell'\vert\Gamma_{m}'\vert k'\rangle \biggr] ~,
	\label{eq:hess1}
\end{align}
where we used the eigendecomposition $\pbbF[\pp]=\sum_{k}\nu_{k}'\vert k'\rangle\langle k'\vert$.
Note that $\Hess$ is positive semi-definite since $\tilde{f}$ is a convex function.
For simplicity, assume for now that $\tilde{f}$ is strictly convex,
in which case $\Hess$ is positive definite and has an inverse $\Hess^{-1}$.
(We give the generalization later.)

Finally, we minimize Eq.~\eqref{eq:taylor} in closed form. First, complete
the square to write
\begin{align}
	\vec{\epsilon}^\top \vec{j}+\frac{1}{2}\vec{\epsilon}^\top \Hess \vec{\epsilon} & =\frac{1}{2}(\vec{\epsilon}+ \Hess^{-1}\vec{j})^\top \Hess (\vec{\epsilon}+ \Hess^{-1}\vec{j})-\frac{1}{2}\vec{j}^\top \Hess^{-1}\vec{j}\\
	& \ge-\frac{1}{2}\vec{j}^\top \Hess^{-1}\vec{j} ~,
\end{align}
where the last inequality is achieved by setting $\vec{\epsilon}^{*}=-\Hess^{-1}\vec{j}$.
This implies that in the perturbative regime, the optimal state
is given by
\begin{align}
\pbb[\bb^{*}]=\pbb[\pp]-(\Hess^{-1}\vec{j})\cdot \vec{\Gamma} ~,
\end{align}
which achieves the optimal value 
\begin{align}
\min \tilde{f}(\bb)= \tilde{f}(\pp)-\frac{1}{2}\vec{j}^\top \Hess^{-1}\vec{j} ~.
\end{align}

The above expressions for the optimal state generalize
to allow for a singular Hessian
if we replace $\Hess^{-1}$
with  the Drazin inverse $\Hess^{\mathcal{D}}$ (or the group inverse, since we can diagonalize the Hessian)~\cite{Riec18_Beyond}.

\section{Generalized Bloch vectors, thermodynamic operators, and ideal inputs within a restricted subspace}
\label{sec:Ideal_restriction}

It will often be useful to know the ideal input to a device,
within a restricted subspace of possible inputs.
For example, when operating on a system with a countably infinite number of energy eigenstates, we may care about inputs with non-zero probability amplitude only in the lowest $N$ energy eigenstates, for some finite $N$.
Or, we may be interested in the best classical inputs to a device, when coherent states cannot be readily prepared.
In such cases, we can find the thermodynamically ideal input to the transformation within this finite-dimensional subspace 
via a very straightforward adaptation of the above techniques.

Let $\mathcal{P}_P$ be the convex subspace of 
the system's possible density matrices $\mathcal{P}$, induced by the 
set of orthogonal projectors $P = \{ \Pi_j \}_j$ with $\Pi_j \Pi_k = \delta_{j,k} \Pi_j$, such that
\begin{align}
\mathcal{P}_P \coloneqq \Bigl\{ \rho \in \mathcal{P} : \rho = \sum_{\Pi \in P} \Pi \rho \Pi \Bigr\} ~.
\end{align}
Density matrices in this subspace 
act on a $d_P$-dimensional vector space $\mathcal{V}_P$,
where $d_P = \sum_{\Pi \in P} \tr(\Pi)$.
The identity operator on $\mathcal{V}_P$
is given by $I_P = \sum_{\Pi \in P} \Pi$.

Each projector $\Pi_j \in P$
has an associated dimension $d_j = \tr(\Pi_j)$.
The restricted subspace $\mathcal{P}_P$ is spanned by a basis of $L$ linearly independent density matrices, where $L = \sum_{j=1}^{|P|} d_j^2$.
In general, 
$d_P \leq L \leq d_P^2$.

By creating an appropriate operator basis and generalized Bloch vector for initial density matrices restricted to $\mathcal{P}_P$, we can easily adapt and leverage all results of previous sections
in this restricted setting.
To achieve this, the following development closely parallels the previous introduction of generalized Bloch vectors.

When the initial density matrix $\rho_0$
is restricted to $\mathcal{P}_P$,
it is useful to 
choose a complete basis 
$(I_P / d_P, \Gamma_1, \Gamma_2, \dots \Gamma_{L - 1})$ for linear operators acting on $\mathcal{V}_P$,
such that the Hermitian operators $\Gamma_n$ are all traceless and mutually orthogonal, satisfying
$\tr( \Gamma_n) = 0$
and 
$\tr(\Gamma_m \Gamma_n) =  \constname_P \, \delta_{m,n}$,
where we will choose the normalizing constant to be $\constname_P = \tfrac{d_P-1}{d_P}$.
Any initial density matrix then has a unique decomposition in the 
operator basis 
$\vec{\Gamma}_P = (\Gamma_1, \Gamma_2, \dots \Gamma_{L - 1})$, 
described by the 
generalized Bloch vector $\vec{b}_0 \in \mathbb{R}^{L - 1}$
via 
\begin{align}
	\rho_0 = I_P / d_P  +  \vec{b}_0 \cdot \vec{\Gamma}_P ~.
	\label{eq:RestrictedBlochExpansion}
\end{align}
Since the magnitude of the
Bloch vector is 
$b_0 = \sqrt{\frac{\tr(\rho_0^2) d_P - 1}{d_P-1}}$,
the density matrix represents a pure state iff
the magnitude of its corresponding Bloch vector is one.

For concreteness, 
we can 
choose the ordered operator basis to begin with a scaled ordering of the $d_P$
diagonal generalized Gell-Mann matrices,
followed by 
$d_j (d_j - 1) / 2$ non-diagonal symmetric Gell-Mann matrices
and  
$d_j (d_j - 1) / 2$ antisymmetric Gell-Mann matrices
for each projector $\Pi_j \in P$
with $d_j > 1$.
Recall that these matrices 
are given explicitly in
App.~\ref{sec:GGM}.

It is now productive to consider the restriction of a thermodynamic operator to this subspace 
$\genop_P \coloneqq \sum_{\Pi \in P} \Pi \genop \Pi$.
It is easy to check that the expectation value for any type-I thermodynamic quantity satisfies
\begin{align}
\tr(\rho_0 \genop_P) = \tr(\rho_0 \genop) = \braket{X}_{\rho_0} \quad \text{for all } \rho_0 \in \mathcal{P}_P ~.	
\end{align}
Notably, $\genop_P$ can be constructed directly, just as $\genop$ was in the previous sections---but
now in the finite-dimensional restricted subspace $\mathcal{P}_P$, using the operator basis $\vec{\Gamma}_P$,
generalized Bloch vector, 
and thermodynamic operator
$\vec{x}_P$
associated with this restricted subspace.
For general processes,
the thermodynamic operator restricted to this subspace 
is constructed as
\begin{align}
\genop_P = \braket{X}_{I_P / d_P}  I_P +  \vec{x}_P \cdot \vec{\Gamma}_P / \constname_P ~.
\end{align}

The thermodynamically ideal inputs within the subspace of interest
are constructed just as before, with
the extremal eigenvalues and associated eigenstates of $\genop_P$ playing the special role indicated in Theorems \ref{thm:Type1ExtremaArePure} and \ref{thm:ForReset_Type2MaxIsType1Max}.
For overwriting processes, the minimal value of type-II quantities is achieved within subspace $\mathcal{P}_P$ by
$\omega^{(\genop_P)}
= e^{-\genop_P} / \tr(e^{-\genop_P})
= e^{- \vec{x}_P \cdot \vec{\Gamma}_P / \constname_P } / \tr(e^{- \vec{x}_P \cdot \vec{\Gamma}_P / \constname_P }) $,
which 
extends an analogous result of 
Ref.~\cite{Kolc21_State} that was formulated for the case of entropy production.

It is interesting to note that these methods---restricting to the $d_P$-dimensional subspace---work even though the states are not restricted to this subspace during their evolution.  Indeed, the protocol can spread these initially-restricted inputs across infinite dimensions, but our finite-dimensional inference of the ideal input within this initial subspace remains valid.

To pursue the example of inputs with support restricted to the $N$ lowest-energy eigenstates $\{ \ket{E_n } \}_{n=1}^N$ of some
initial system Hamiltonian $H_0$, we would consider the density matrices with support on the system's Hilbert subspace:
$\mathcal{H}_\text{sub} = \text{span} \bigl( \{ \ket{E_n } \}_{n=1}^N \bigr)$.
I.e., we would consider the restricted set of initial density matrices
 $\mathcal{P}_\text{sub} = \Bigl\{ \rho = \sum_\ell p_\ell \frac{\ket{\psi_\ell} \bra{\psi_\ell}}{ \braket{ \psi_\ell | \psi_\ell }}
 : \, p_\ell \in(0, 1] , \, \sum_\ell p_\ell = 1 , \, \ket{\psi_\ell} \in \mathcal{H}_\text{sub}  \Bigr\}$,
which is induced by the single rank-$N$ projector $\Pi = \sum_{n=1}^N \ket{E_n} \bra{ E_n} $.
Within this subspace, there are 
$L = d_{\{ \Pi \}}^2 = N^2$ linearly independent
initial states (compared to the $d^2 = \infty$ linearly independent initial states within the full Hilbert space).

\subsection{Classical systems}

If we define classical inputs as those states restricted to be initially incoherent in a particular `classical' orthonormal basis $\mathcal{C}$,
then we can see how the general quantum problem 
simplifies significantly,
if we seek the classical thermodynamic operators and ideal classical inputs.
Classical density matrices are those induced by the set of orthonormal
rank-1 projectors
$P = \{ \ket{b} \! \bra{b} \}_{b \in \mathcal{C}}$.
Classical density matrices are diagonal in the classical basis, and can be regarded as a representation of a probability distribution over these classical states. 
Note that for $d$-dimensional systems, there are only $L=d$ linearly independent classical density matrices, in contrast to the $d^2$ linearly independent quantum density matrices over the same space.
Accordingly, each classical thermodynamic vector has 
$d-1$ components, 
rather than the $d^2-1$ components of its quantum counterpart.
Similarly,
each classical thermodynamic operator will be fully determined by the behavior of $d$, rather than $d^2$, initial states.

\section{Example 1: Nonequilibrium thermodynamics of a  qubit-reset device}
\label{sec:Examples}

Quantum computing requires a mechanism for resetting each qubit to the computational-basis state $\ket{0} = \sigma_z \ket{0}$.
Different implementations
of the same task will however have distinct sets of thermodynamically ideal inputs.
Nonequilibrium thermodynamic
quantities are determined less by
\emph{what} you do 
than \emph{how} you do it.

For a paradigmatic illustration of 
our results,
we consider the same
device for qubit reset that was 
used 
in Ref.~\cite{Riec21_Impossibility}
to generate Ref.~\cite{Riec21_Impossibility}'s Fig.~1.  The device works by changing 
both the energy gap and spatial orientation of the energy eigenstates of the qubit, while the qubit is weakly coupled to a thermal environment.
The detailed dynamical equations are reviewed in App.~\ref{sec:EOMforReset}, but these details are not central to the main points we make here.

To determine 
how the device's thermodynamic behavior
depends on the input state,
we track the evolution of 
four randomly sampled initial density matrices, together with the thermodynamic output from each of these four inputs.
From the Bloch matrix $B$
and the measured thermodynamic output,
we construct the Hermitian thermodynamic operators.
For example, 
the expected-heat and expected-work operators,
$\mathcal{Q}$ and $\mathcal{W}$,
allow us to
determine 
(i) 
the ideal inputs leading to minimal and maximal heat and work,
and (ii)
the full range of heat and work that can be attained by \emph{any} input to the device.
These are obtained from
the extremal
eigenvalues and associated eigenstates of the thermodynamic operators.

\subsection{Diversity among ideal inputs for thermodynamic objectives}

Simple combination and manipulation of the heat and work operators reveals the 
diversity of
ideal inputs 
for a multitude of 
different thermodynamic objectives, as shown in Fig.~\ref{fig:IdealQubitInputs}.

In the figures, we emphasize the heat exhausted to the environment $-Q$, rather than the heat $Q$ directly.
In this example, with a single environmental bath at constant temperature $T$, note that entropy flow to the environment is simply related to heat out of the system via $\EF = -Q/T$.
The expected-entropy-flow operator 
is thus simply related to the expected-heat operator in this case, 
via $\EFO = - \mathcal{Q} / T$.
Meanwhile, the expected-energy-change operator
is simply 
$\mathcal{Q} + \mathcal{W} $.

\begin{figure}[H]

\begin{center}
	\includegraphics[width=\textwidth]{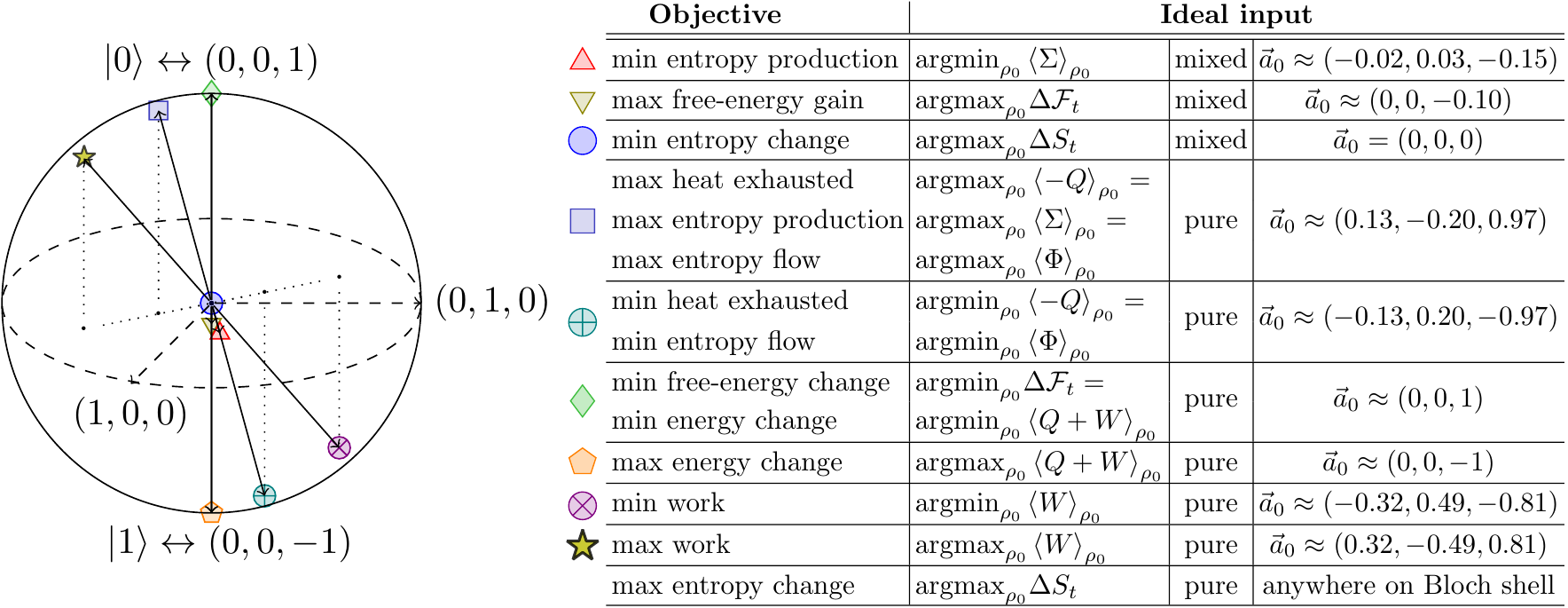}
\end{center}

\caption{Diversity of ideal inputs for a finite-time qubit-reset process, displayed on and in the Bloch sphere.  The states extremizing heat, work, and energy-change all lie on the surface of the Bloch sphere, in the direction of a maximal eigenstate of the corresponding thermodynamic operators.  The entire surface of the Bloch sphere maximizes entropy gain.  Minimal entropy production and maximal free energy gain are achieved by non-trivial mixed-state inputs. 
The change in entropy is minimized by the fully-mixed input.
Entropy production is maximized by the same pure-state input that 
maximizes heat exhaustion.
The greatest loss of free energy occurs for the same pure-state input that loses the most energy.
}
\label{fig:IdealQubitInputs}
\end{figure}

\subsection{Bounding the behavior of all inputs}

Continuing our example of the qubit-reset dynamics, we now leverage our results to identify the extremal thermodynamic behavior that can be attained by any input throughout the process.

\begin{figure}[h]

	\begin{center}
		\includegraphics[scale=1.0]{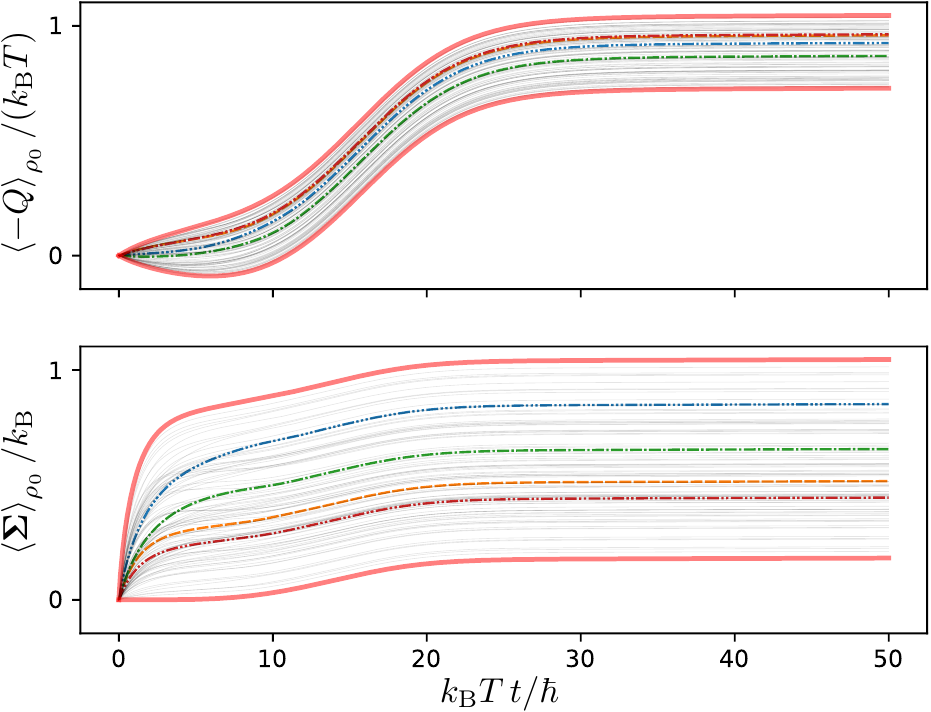}
	\end{center}
	
	\caption{Tracking the behavior of four inputs is enough to bound the behavior of all other inputs to a qubit process.  
		Here we show the the range of expectation values for exhausted heat and entropy production throughout a finite-time qubit-reset process.
		The expectation values from four random inputs are shown as dashed lines.  This allows construction of the thermodynamic operator $\mathcal{Q}$ throughout time.  
		(top) 
		Maximal and minimal heat, corresponding to extremal eigenvalues of $\mathcal{Q}$, shown as thick red solid lines;
		(bottom)
		Maximal and minimal entropy production, obtained from gradient descent/ascent, shown as thick red solid lines.
		These extrema
bound the behavior of all other inputs, 
		including the behavior of 100 other random initial conditions shown as thin gray solid lines.
}
	\label{fig:EdgeTracking}
\end{figure}

Fig.~\ref{fig:EdgeTracking}
demonstrates that thermodynamic observations from just four inputs yield the full range of thermodynamic behavior from any input.
For example, the min and max expected work at any time $t \in [0, \tau]$, obtainable from alternative inputs, is determined by the expected-work operator at that time.  The expected-work operator at any time is constructed from the expected work performed on each of the four test inputs up to that time.  Determining the ranges of work, energy change, and heat thus reduces to 
determining
eigenvalue ranges of the respective Hermitian operators.

Determining the range of entropy production throughout the process
is somewhat more complicated, although it still only requires the data from four test inputs.  
Notably,  
in the bottom panel of Fig.~\ref{fig:EdgeTracking},
we find
the states of minimal and maximal entropy production at times before the state is fully reset.
This employs the gradient-descent algorithm
developed in Sec.~\ref{sec:GradientDescentAlgo}.

\section{Example 2: Extracting work from a spatially extended state}
\label{sec:Example2}

\begin{figure}[h]
	\begin{center}
	\includegraphics[scale=0.935]{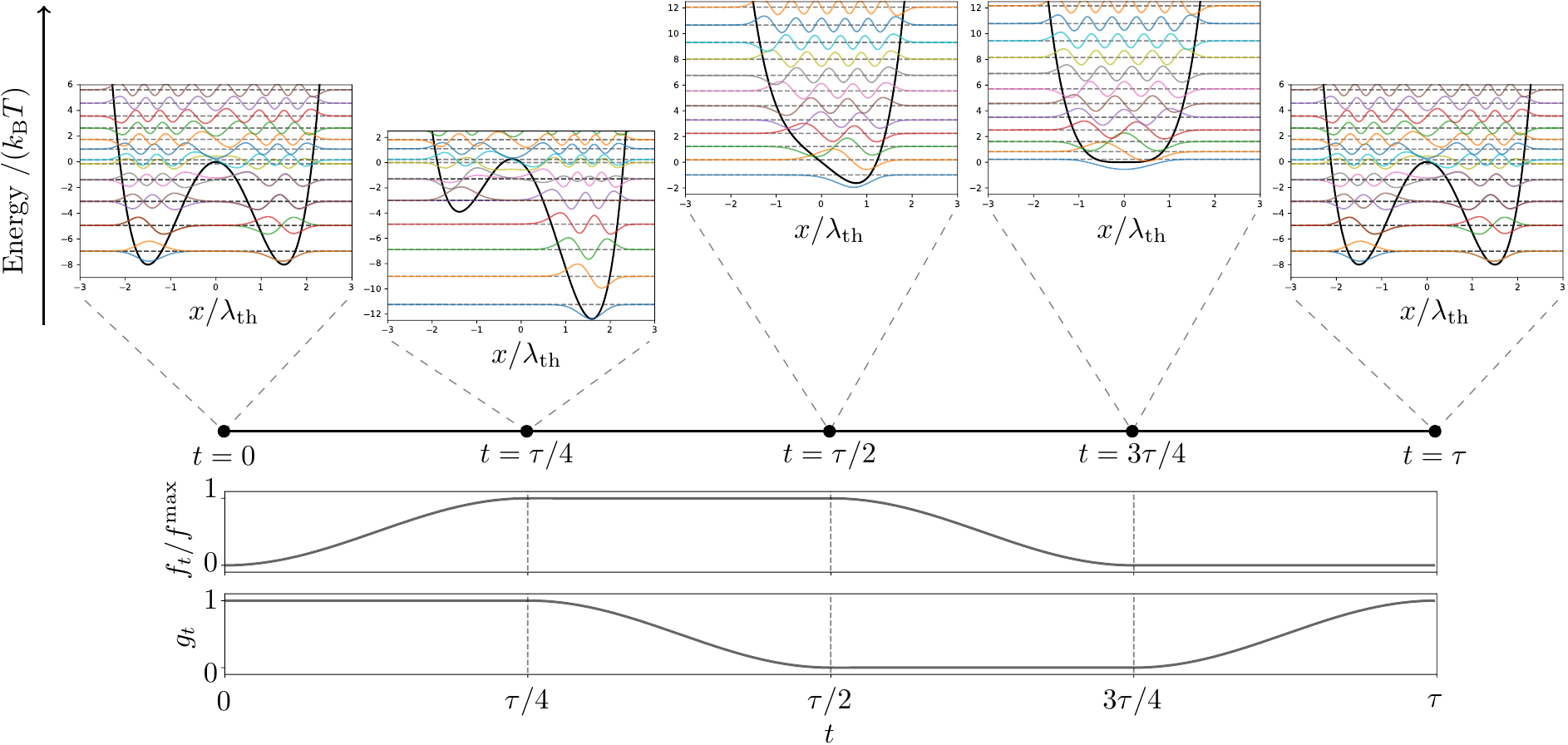}
	\end{center}
	\caption{
		Control protocol to extract work from the nonequilibrium state of a particle on the right-hand side of a double-well potential, via a time-varying one-dimensional potential-energy landscape in the quantum regime.  Middle and bottom panel show the protocol for tilting the potential and lowering the energetic barrier, respectively.
		The top panels show the potential-energy landscape at five representative times along the protocol, staggered so that they share the same energy axis.  We show a spatial representation of the lowest-lying energy eigenstates in these potentials, at the height of their respective energy eigenvalues.
}
\label{fig:double_well_protocol}
\end{figure}

Let us now turn to the analysis of an infinite-dimensional quantum system:
a charged particle in a time-dependent potential, in the presence of electromagnetic background radiation.
This can be interpreted
as an idealized model of a double quantum dot in the single-particle zero-current regime.
In this example, 
we use time-dependent Lindbladian dynamics, with a time-modulated double-well of potential energy across one spatial dimension.
The protocol is 
capable of extracting work from some states that are initially localized in the right well.
We consider the ideal input (for the task of work extraction),
and how it compares to the performance of other inputs,
within the Hilbert space spanned by the eight lowest lying 
energy eigenstates of the initial Hamiltonian.

We will quantify work extraction simply as the negative work performed.  In principle, this work can be extracted by a more explicit mechanism, like proposed in Ref.~\cite{Skrz14}.

Given any protocol for work extraction,
we can construct the expected-work operator 
to determine its breadth of behavior on all possible inputs.  The device will be best at extracting work from the 
lowest lying eigenstate of the expected-work operator; 
the corresponding minimal-eigenvalue magnitude
reports the maximal possible expected value of extractable work. 

In particular, we simulate a work-extraction protocol via the time-dependent system Hamiltonian
$H_t = \frac{p_x^2}{2m} + V_t$
where 
$p_x = -i\hbar \partial_x$ is the momentum operator and 
$m$ is the mass of the single-particle system.
The time-dependent potential-energy landscape is
\begin{align}
	V_t = 
16 h_0 \Bigl( \frac{x}{w_0} \Bigr)^4 
- 8  h_0 \Bigl( \frac{x}{w_0} \Bigr)^2 g_t 
- h_0 \frac{x}{w_0}  f_t
	~, 
\end{align}
where 
$\barrier_t$ and
$\tilt_t$
are non-negative scalar functions of time,
with $\barrier_0 = \barrier_\tau = 1$
and 
$\tilt_0 = \tilt_\tau = 0$.
To enter an interesting thermodynamic regime, we choose
the initial energy barrier between the two wells to be $h_0 = 8 \kB T$,
where $\kB T = 1/\beta$ is the thermal energy of the environment.
To enter an interesting quantum regime, we choose
the initial separation between the bottom of the two wells to be $w_0 = 3 \lambda_\text{th}$, where 
$\lambda_\text{th} = \hbar \sqrt{2 \pi \beta / m }$ is the thermal de Broglie wavelength.
To induce nonequilibrium quantum dynamics,
we choose a sufficiently fast protocol.

As depicted in Fig.~\ref{fig:double_well_protocol},
$\barrier_t$ controls the height of the barrier
while $\tilt_t$ controls the tilt of the energetic landscape throughout the protocol.
For a semiconductor-based double quantum dot, the barrier height and tilt could plausibly be modulated by applied voltages, gating the barrier and the bias across the device, respectively~\cite{DiVi05_Double}.
Note that the protocol is cyclic, since the initial and final Hamiltonian 
are both the same symmetric double-well potential.
During the protocol, the potential is tilted to the right, the barrier is lowered, the potential is untilted, and the barrier is then reintroduced, in that order.~\footnote{This work-extraction protocol is the time-reversed control sequence of a reset protocol.}  The exact form of the control protocol is given in App.~\ref{sec:DoubleWellDynamics}.

The system is coupled to a thermal environment throughout the protocol.  For our demonstration, 
we 
model
the excitation and relaxation dynamics 
when the system 
interacts with
the photon bath
of an ideal blackbody at temperature $T$.
The resulting dynamics
of the system during the protocol can be described by a Lindblad master equation 	
$\dot{\rho}_t = \mathcal{L}_{t} (\rho_t)$, 
given explicitly in App.~\ref{sec:DoubleWellDynamics}, which satisfies detailed balance 
in bath-mediated transitions between instantaneous energy eigenstates.

We simulate a regime that could plausibly be  
demonstrated in a room-temperature laboratory experiment
using a nanofabricated device.
The resulting dynamics exhibits a separation of timescales,
where the relaxation dynamics are much slower than the timescale of coherent oscillations between energy eigenstates.
(Dissipative transitions occur approximately only once per 100 million periods of phase oscillation between energy levels with a $\kB T$ energy spacing.)
We develop special methods 
to efficiently simulate the quantum dynamics (App.~\ref{sec:QuantumDynamicsOfTimescaleSep}) 
and thermodynamics 
(App.~\ref{sec:QuantumThermoDynamicsOfTimescaleSep}) 
over the extended duration $\tau = 4 \times 10^9 \beta \hbar \approx 100 \, \mu$s 
of the nonequilibrium protocol.

Which initial state yields the most extracted work given this protocol?
To determine this, 
we construct the expected-work operator $\mathcal{W}$ acting on the Hilbert space 
spanned by the 
eight lowest lying initial energy eigenstates,
via simulating the behavior of 64 random linearly independent initial states within this subspace. 
In our simulations,
these initially restricted states evolve through the Hilbert space spanned by the twenty lowest lying instantaneous energy eigenstates throughout the protocol.

\begin{figure}[ht]
	\begin{center}
		\includegraphics[scale=0.99]{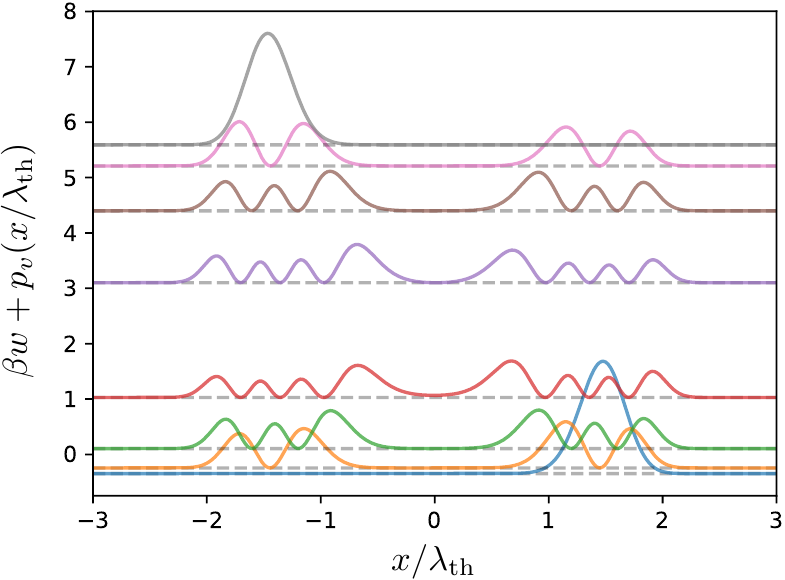}
	\end{center}
	\caption{
Spatial probability density  
		$p_v(x/\lambda_\text{th}) = \lim_{\delta x \to 0} \frac{\lambda_\text{th}}{\delta x} \, | \langle x, \, x+\delta x | v \rangle |^2$
		for the eigenstates $\ket{v}$ of the expected-work operator $\mathcal{W}$,
		offset by the corresponding work eigenvalue $w$, 
		such that $\mathcal{W} \ket{v} = w \ket{v}$.
}
	\label{fig:work_eigenstuff}
\end{figure}

Fig.~\ref{fig:work_eigenstuff}
depicts the eigenvalues and eigenstates of the expected-work operator,
via
their spatial probability density functions.
Two eigenmodes of the expected-work operator allow for work extraction,
while the other six modes require a work investment.
As one might expect, 
the ideal initial state leading to maximal work extraction is initially localized in the right well.  This state is approximately formed from equal parts of the two lowest lying energy eigenstates.
It is also intuitive that the worst input begins purely in the left well.

The remaining six non-extremal 
eigenstates of the expected-work operator are less immediately intuitive---yet they 
yield interesting insights once properly understood.
As suggested by the spatial probability density functions in
Fig.~\ref{fig:work_eigenstuff},
each work-eigenstate is approximately a linear combination of up-to-two energy eigenstates (although in detail they contain contributions from all). Some investigation revealed that each work eigenstate 
roughly corresponds to a  
relative phase combination of energy eigenstates 
that leaves the particle either mostly in the right well (for small eigenwork) or mostly in the left well (for large eigenwork) during the first half of the protocol.  
Fig.~\ref{fig:work_eigenstuff} displays some kind of symmetry in the vertical spacing of work eigenvalues that could likely be related to fluctuation relations, although that investigation remains an open opportunity.

\begin{figure}[ht]
	\begin{center}
		\includegraphics[scale=1.03]{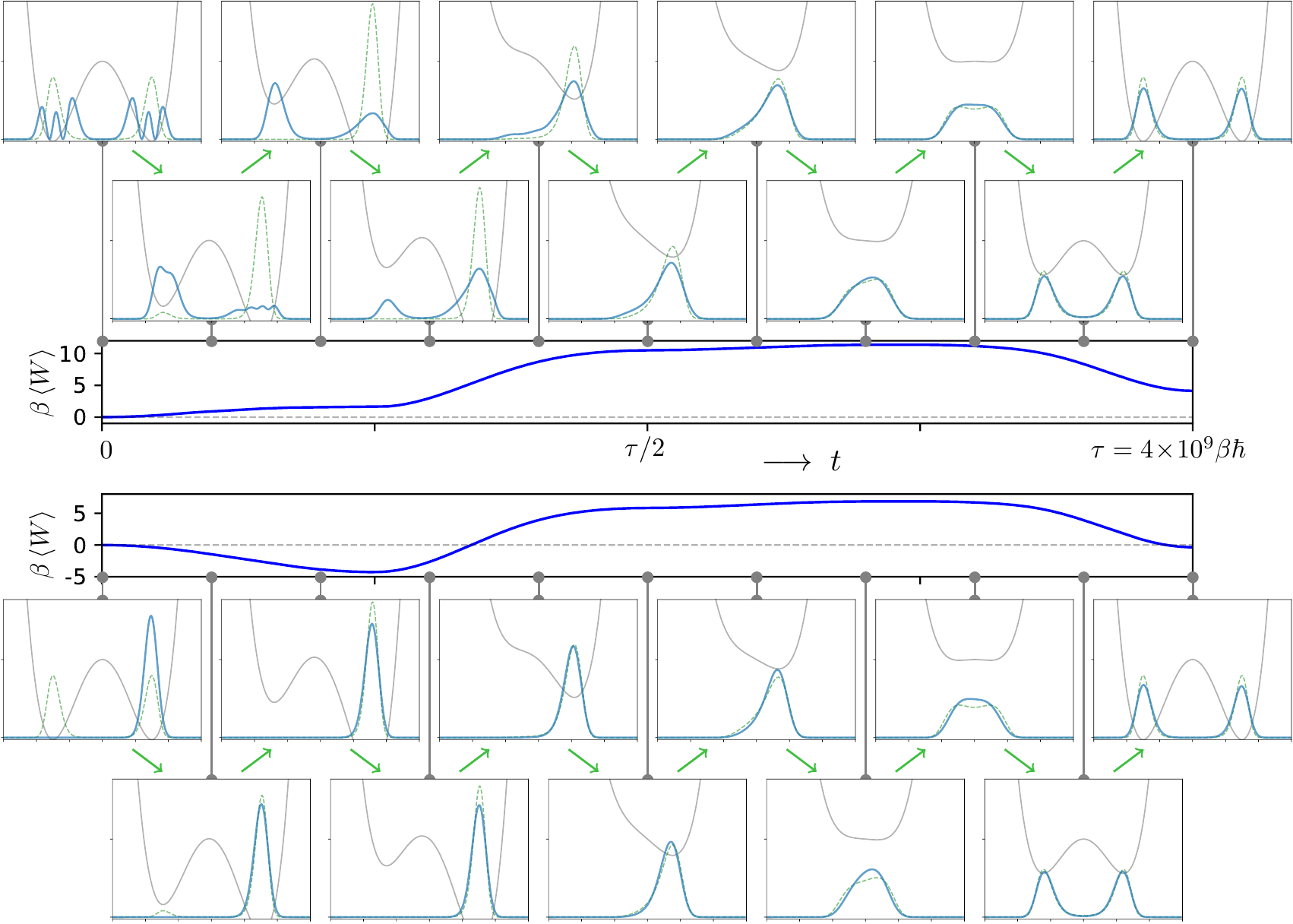}
	\end{center}
	\caption{
		Bottom panels: 
		Snapshots of spatial evolution of the ideal input to the work-extraction protocol,
		along with the resulting trajectory of expected work.
		Upper panels: 
		Snapshots of spatial evolution of a random input to the protocol,
		along with the resulting trajectory of expected work.
		Top and bottom panels show spatial probability density of the state in thick solid blue, compared to the probability density for the instantaneous equilibrium state in thin dashed green; the instantaneous potential-energy landscape, which starts and ends as a symmetric double well, is shown in thin solid gray.
		The two long middle panels show the time evolution of the expected work from each of these two initial states.
	}
	\label{fig:work_traj}
\end{figure}

The ideal input to this particular work-extraction protocol yields 
approximately $0.3472 \, \kB T$ of \emph{extracted} work
($\braket{W} = -0.3472 \, \kB T$),
which is about half of the nonequilibrium addition to free energy
($\approx \kB T \ln 2$)
available from that initial state.
This work-extraction value was initially found via the minimal eigenvalue of 
the expected-work operator, and subsequently verified via direct simulation
of this ideal input.
The spatial evolution of this ideal state, and the corresponding work trajectory are shown in the bottom half of Fig.~\ref{fig:work_traj}.
There it is apparent that the ideal state remains on the right well for the first half of the work-extraction protocol.

For comparison, the top half of Fig.~\ref{fig:work_traj} shows 
the spatial evolution and corresponding work trajectory
of a random input (in particular, we chose one of the initial energy eigenstates).
No work is extracted when operating on this state; 
rather, 
about 
$4.138 \, \kB T$ of work must be \emph{expended} in the process.
We verified that this same work value is obtained 
(through ten significant digits)
both
from the full simulation and
from the simple algebraic employment of the expected-work operator~\footnote{Notably, this initial condition was 
not used to infer	
the expected-work operator.}: 
$\tr(\rho_0 \mathcal{W}) \approx 4.138 \, \kB T$.
The series of panels 
in the top half of Fig.~\ref{fig:work_traj}
reveals the work penalty for populating the left well during the first half of the protocol.
The chronologically second snapshot 
reveals a slow local equilibration within each well.
Subsequent snapshots reveal 
an even slower tendency towards global equilibration throughout the protocol.

As anticipated by the general theory, 
the initial state leading to maximal work extraction is a pure state.
In this case, 
this state is close to, but distinct from, the input leading to minimal entropy production---the latter of which is a non-pure mixed state.
As a final note, we emphasize that we've been able to identify the ideal input within the initial low-energy subspace, although the evolving states were not limited to this subspace during their natural dynamics.

\section{Conclusion}

We have determined the ideal inputs that minimize or maximize various thermodynamic quantities for any fixed process that transforms a physical system in finite time.
Many of these optimal inputs turn out to be pure states corresponding to eigenstates of Hermitian thermodynamic operators.
We showed how to reconstruct these operators 
via observed behavior from a finite number of experimentally accessible input states.
Another class of thermodynamic quantities, based on entropies,
have mixed-state minimizers but pure-state maximizers.
The Hermitian thermodynamic operators
determine 
these ideal states too.
Our examples illustrate  
the incompatibility of common objectives:
The `ideal' input 
depends
on
whether one intends to minimize heat, minimize entropy production, maximize free-energy gain, maximize work extraction, etc.

This investigation of ideal initial states complements 
the centuries-old tradition of rather seeking ideal protocols with an assumed initial state.
Whether or not a protocol is ideal, 
our results highlight the initial-state dependence 
of a device's performance across thermodynamic metrics, 
and expose the breadth of its possible behavior.
While we emphasized thermodynamics,
the results of this paper extend easily to other domains---where the ideal inputs, as judged by some other criteria, 
like maximizing the yield of a desired output state~\cite{Halp20_Fundamental},
will be obtained from the linear operators induced by those criteria.

\section{Acknowledgements}

AK thanks Sosuke Ito for support. 
This work is supported by the Singapore Ministry of Education Tier 1 Grants RG146/20, grant no.\ FQXi-RFP-IPW-1903 (`Are quantum agents more energetically efficient at making predictions?') from the Foundational Questions Institute and Fetzer Franklin Fund (a donor-advised fund of Silicon Valley Community Foundation), the National Research Foundation, Singapore, and Agency for Science, Technology and Research (A*STAR) under its QEP2.0 programme (NRF2021-QEP2-02-P06) and the Singapore Ministry of Education Tier 2 Grant MOE-T2EP50221-0005.
This project was partly made possible through the support of Grant 62417 from the John Templeton Foundation. The opinions expressed in this publication are those of the authors and do not necessarily reflect the views of the John Templeton Foundation. 

\appendix

\section{Existence of thermodynamic operators}
\label{sec:LinearFunctionals}

Type-I expectation values---for expected work, heat, entropy flow, etc.---are linear functionals of the initial state,
since 
they can all be written as 
\begin{align}
\braket{X}_{\rho_0} =  \tr[\Xi_X(\rho_0)] ~,
\end{align}
where $\Xi_X$ is a linear superoperator acting on the 
initial density matrix of the system.
$ \tr[\Xi_X( \cdot )] $ is a linear functional 
since $\Xi_X$ and the trace operation are both linear, and the trace returns a scalar value.
The superoperator $\Xi_X$ is however relatively unwieldy,  
since it acts on the $d^2$ dimensional vector space spanned by density matrices that in turn act on a $d$-dimensional vector space $\mathcal{V}$.
The following theorem shows that 
the same linear functional can be represented via the lower-dimensional operator $\mathcal{X}$
acting on $\mathcal{V}$.
This assures the existence of our thermodynamic operators.

\begin{thm}
	For any linear 
	functional 
	$\ell$
acting on 
	a finite $d^2$-dimensional vector space 
	$\mathcal{V}_\text{big}$
	of linear operators, 
	which in turn operate on 
	a $d$-dimensional vector space $\mathcal{V}_\text{small}$,
	there exists a linear operator $\mathcal{X}$
	acting on $\mathcal{V}_\text{small}$
	such that 
\begin{align}
\ell( \cdot ) = \tr( \mathcal{X} \, \, \cdot  \, ) ~.
\end{align}
\end{thm}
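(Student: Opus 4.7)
The plan is to invoke the finite-dimensional Riesz representation theorem, made concrete by picking a Hilbert--Schmidt orthonormal basis for $\mathcal{V}_\text{big}$. I would first equip the space of linear operators on $\mathcal{V}_\text{small}$ with the Hilbert--Schmidt inner product $\langle A, B \rangle_\text{HS} \coloneqq \tr(A^\dagger B)$; on a finite-dimensional operator space this pairing is positive-definite, so $(\mathcal{V}_\text{big}, \langle \cdot, \cdot \rangle_\text{HS})$ is a $d^2$-dimensional Hilbert space. The functional $\ell$ to represent is an element of its dual.

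Next, I would choose any orthonormal basis $\{E_i\}_{i=1}^{d^2}$ for $\mathcal{V}_\text{big}$ satisfying $\tr(E_i^\dagger E_j) = \delta_{ij}$; every $A \in \mathcal{V}_\text{big}$ then admits the expansion $A = \sum_i \tr(E_i^\dagger A)\, E_i$. Applying $\ell$ term-by-term and using linearity together with the cyclic property of the trace gives
\begin{align}
\ell(A) = \sum_i \ell(E_i)\, \tr(E_i^\dagger A) = \tr\!\left( \Bigl[ \sum_i \ell(E_i) E_i^\dagger \Bigr] A \right) = \tr(\mathcal{X} A)~,
\end{align}
where $\mathcal{X} \coloneqq \sum_i \ell(E_i)\, E_i^\dagger$ is a fixed linear operator on $\mathcal{V}_\text{small}$ that is independent of $A$. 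This is precisely the desired representation. Notice that $\mathcal{X}$ is basis-independent: any two choices of orthonormal basis yield the same $\mathcal{X}$ because both realize the same element of the dual via the Hilbert--Schmidt pairing.

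I do not anticipate any real obstacle here; the result is the operator-space specialization of the standard fact that every linear functional on a finite-dimensional inner-product space equals the inner product with a unique fixed vector. The one extra step worth mentioning, which the main text quietly invokes, is that when $\ell$ is real on every Hermitian input---as holds for every physical expectation value considered---the identity $\tr(\mathcal{X} A) = \overline{\tr(\mathcal{X} A)} = \tr(\mathcal{X}^\dagger A)$ for all Hermitian $A$, combined with the fact that Hermitian operators span $\mathcal{V}_\text{big}$, forces $\mathcal{X} = \mathcal{X}^\dagger$. This secures the Hermiticity of the thermodynamic operators on which the remainder of the paper relies.
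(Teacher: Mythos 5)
Your proposal is correct and takes essentially the same route as the paper: both expand in a Hilbert--Schmidt orthonormal basis of the operator space (the paper uses the matrix units $\ket{j}\!\bra{k}$, you allow any orthonormal $\{E_i\}$) and read off $\mathcal{X}$ from the coefficients $\ell(E_i)$ via the trace pairing. Your closing Hermiticity argument is a nice self-contained replacement for the paper's citation of Hassani, though note the collection step $\sum_i \ell(E_i)\tr(E_i^\dagger A) = \tr(\mathcal{X}A)$ needs only linearity of the trace, not cyclicity.
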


\begin{proof}	
	There are only $d^2$ linearly independent linear functionals acting on 
	the $d^2$-dimensional vector space 
	$\mathcal{V}_\text{big}$.
	Let $K = \{ \ket{k} \}_{k=1}^d$ be an arbitrary orthonormal basis for 
	$\mathcal{V}_\text{small}$, with $\bra{k} = \ket{k}^\dagger$.
Note that the $d^2$  linearly independent linear functionals $\{ \tr( \, \ket{j} \bra{k} \, \cdot \, )\}_{\ket{j}, \ket{k} \in K}$ form a complete basis for the dual space of 
	$\mathcal{V}_\text{big}$.
	An arbitrary linear functional acting on $\mathcal{V}_\text{big}$ can thus be written as a 
	linear combination of these basis functionals:
\begin{align}
\ell(\cdot) = \sum_{j,k} x_{j,k} \tr( \, \ket{j} \bra{k} \, \cdot \, ) = \tr( \mathcal{X} \, \, \cdot  \, ) 
\end{align}
where 
$\mathcal{X} = \sum_{j,k} x_{j,k} \ket{j} \bra{k}$
and  $x_{j,k} = \ell(\ket{k} \bra{j})$.
\end{proof}

Hence, for type-I expectation values:
\begin{align}
	\braket{X}_{\rho_0} =  \tr[\Xi_X(\rho_0)] = 
	 \tr( \rho_0 \mathcal{X} )  ~.
\end{align}
Whenever the random variable $X$ is real-valued, its 
expectation value must also be real valued, and so $\genop$ is guaranteed to be Hermitian (see Thm.\ 2.4.3 of Hassani \cite{Hass99a}).

Sometimes the linear superoperator $\Xi_X$ can appear very complicated; but only its existence matters to guarantee the existence of the simpler operator $\mathcal{X}$.
For example, entropy flow can generically be written as 
\begin{align*}
	\braket{\EF}_{\rho_0} 
	&= 
	- \kB \int_0^\tau \tr \bigl( \dot{\rho}_t^\text{env}  \ln \stationary_{t}^\text{env} \bigr) \, dt ~,
\end{align*}	
where 
$\stationary_{t}^\text{env}$ is assumed to be independent of $\rho_0$.
If we write out $\dot{\rho}_t^\text{env} $ more explicitly,
we find that
\begin{align}
	\braket{\EF}_{\rho_0} 
	&= 
	- \kB \int_0^\tau \tr \bigl( \dot{\rho}_t^\text{env}  \ln \stationary_{t}^\text{env} \bigr) \, dt \nonumber \\
	&= 
	- \kB \int_0^\tau \tr \bigl[ ( \tfrac{d}{dt} \rho_t^\text{env} )  \ln \stationary_{t}^\text{env} \bigr] \, dt \\
	&= 
	- \kB \int_0^\tau \tr \Bigl\{ \bigl[ \tfrac{d}{dt} \tr_{\text{sys}}( U_t \rho_0 \otimes \rho_0^\text{env} U_t^\dagger ) \bigr]  \ln \stationary_{t}^\text{env} \Bigr\} \, dt \\
	&= 
 \tr \Bigl\{ - \kB \int_0^\tau \bigl[ \tfrac{d}{dt} \tr_{\text{sys}}( U_t \rho_0 \otimes \rho_0^\text{env} U_t^\dagger ) \bigr]  \ln \stationary_{t}^\text{env}  \, dt \Bigr\}  
	~. 
\end{align}	
Hence, we've found the relevant linear superoperator $\Xi_{\EF}( \cdot ) = - \kB \int_0^\tau \bigl[ \tfrac{d}{dt} \tr_{\text{sys}}( U_t \, \cdot \,  \otimes \rho_0^\text{env} U_t^\dagger ) \bigr]  \ln \stationary_{t}^\text{env}  \, dt $,
which is somewhat complicated.
No matter: 
Its mere existence proves the existence of some simpler linear operator $\boldsymbol{\Phi}$
such that 
$	\braket{\EF}_{\rho_0} = \tr( \rho_0 \boldsymbol{\Phi})$
for any initial state $\rho_0$.

A similar procedure reveals the existence of the relevant linear superoperator $\Xi( \cdot )$
for work and heat, which in turn implies the existence of the simpler expected-work operator $\mathcal{W}$ and expected-heat operator $\mathcal{Q}$.

Expanding the expression for expected work, we find
\begin{align}
	\braket{W}_{\rho_0} 
	&= 
	\int_0^\tau \tr \bigl( \rho_t \dot{H}_t  \bigr) \, dt \nonumber \\
	&= 
	\tr \Bigl[ \int_0^\tau \tr_{\text{env}}( U_t \rho_0 \otimes \rho_0^\text{env} U_t^\dagger )  \dot{H}_t  \, dt  \Bigr] ~.
	\label{eq:WorkOnJointIput}
\end{align}	
Hence, the linear superoperator
$\Xi_W( \cdot ) =  \int_0^\tau \tr_{\text{env}}( U_t \, \cdot \, \otimes \rho_0^\text{env} U_t^\dagger )  \dot{H}_t  \, dt $
implies the existence of the simpler expected-work operator $\mathcal{W}$.

Expanding the expression for expected heat, we find
\begin{align}
	\braket{Q}_{\rho_0} 
	&= 
	\int_0^\tau \tr \bigl(  \dot{\rho}_t H_t  \bigr) \, dt \nonumber \\
	&= 
	\tr \Bigl\{ 
	 \int_0^\tau \bigl[ \tfrac{d}{dt} \tr_{\text{env}}( U_t \rho_0 \otimes \rho_0^\text{env} U_t^\dagger ) \bigr]  H_t  \, dt 
	\Bigr\} ~.
\end{align}	
Hence, the linear superoperator
$\Xi_Q( \cdot ) =  \int_0^\tau \bigl[ \tfrac{d}{dt} \tr_{\text{env}}( U_t \, \cdot \, \otimes \rho_0^\text{env} U_t^\dagger ) \bigr]  H_t  \, dt $
implies the existence of the simpler expected-heat operator $\mathcal{Q}$.

\section{Thermodynamic operators characterize any measurement scheme}
\label{sec:MeasurementSchemes}

Invasive measurements change not only the probability distribution for work, but also
the expected value for work under the prescribed measurement protocol~\cite{Pera17_No}.
Our framework applies just as well to these alternative protocols with measurement interventions.
For example, let us briefly describe how our framework applies to the 
famous two-point-measurement (TPM) scheme for driven isolated quantum systems, which projects the state onto the instantaneous energy eigenbasis at both the beginning and end of the protocol~\cite{Espo09_Nonequilibrium}.

If we are interested in the TPM scheme for a driven isolated quantum system, the relevant \emph{expected-TPM-work operator} is
\begin{align}
	\mathcal{W}_\text{TPM} = \sum_{\ket{E} \in V_H} \ket{E} \! \bra{E}  \sum_{\ket{E'} \in V_{H'}} (E' - E ) \bigl| \braket{E' | U_\tau | E} \bigr|^2  ~,
\end{align}	
where $H$ and $H'$ are the initial and final Hamiltonians for the system respectively, and $U_\tau$ 
unitarily evolves 
the system between the two projective measurements.
In the TPM scheme,
the system transitions from initial energy eigenstate $\ket{E}$ 
to final energy eigenstate $\ket{E'}$
with probability 
$\braket{E | \rho_0 | E} \bigl| \braket{E' | U_\tau | E} \bigr|^2  $,
resulting in 
a TPM work value of $E'-E$.
It is easy to verify that
\begin{align}
\braket{W_\text{TPM}}_{\rho_0} = 
\sum_w
	w \Pr(W_\text{TPM} = w) 
	= \tr( \rho_0 \mathcal{W}_\text{TPM}) ~,
	\label{eq:TPM_RV_and_operator}
\end{align}
where $W_\text{TPM}$ is the random variable for
the TPM work outcome, and the sum runs over all $w \in \{ E' - E : E \in \Lambda_H , E' \in \Lambda_{H'} \}$.

Ref.~\cite{Pera17_No} shows that $\mathcal{W} \neq \mathcal{W}_\text{TPM} $.
Invasive measurements can change the expectation value for work 
(and indeed for other thermodynamic variables too).

If desired, our framework can be employed to find the ideal inputs leading to minimal or maximal TPM work, from
the minimal and maximal eigenstates of  
$\mathcal{W}_\text{TPM}$.
Indeed,
for any measurement scheme---including also the one-time-measurement scheme~\cite{Deff16_Quantum, Beye20_Work, Sone20_Quantum} or any other---thermodynamic operators can be constructed, and our framework can be applied to identify both the breadth of behavior and the ideal inputs within the scheme.

\section{Expectation-value invariance to subensemble decomposition}
\label{sec:RVsForQuantumThermo}

Our contributions in the main text exclusively involve
expectation values, so are 
invariant to density-matrix decompositions.
Here we show this invariance explicitly.
As an added benefit,
it can be useful to see how the
quantum random variables 
induced by a subensemble decomposition
relate to and generalize their classical counterparts.

In the classical case, fluctuations are primarily due to subjective uncertainty about which state the environment is in (although uncertainty of the system state also plays a role).
Since quantum results should include classical results as a special case,
we must also consider the role of subjective uncertainty in the quantum case.
In general, this can be addressed via a probability distribution over (possibly nonorthogonal) subensembles of the joint system--environment supersystem~\cite{Alla05_Fluctuations}.

There are many ways to decompose the initial joint density matrix of the system--environment supersystem. 
In fact for a non-pure mixed state $\rho_0^\text{tot}$,
there are infinitely many pairs $\bigl( (p_n)_n, (\sigma_n)_n \bigr)$
of probability distributions $(p_n)_n$ 
over constituent density matrices $(\sigma_n)_n$,
for which 
$\rho_0^\text{tot} = \sum_n \sigma_n  p_n$.
Each of these decompositions can represent a physically relevant preparation of the supersystem~\cite{Alla05_Fluctuations, Anza21_Beyond}.

Each subensemble decomposition $\bigl( (p_n)_n, (\sigma_n)_n \bigr)$
induces a random variable 
$X\bigl( (p_n)_n, (\sigma_n)_n \bigr)$.~\footnote{This generalizes the classical case,
where 	
the 
most fine-grained version of the
random variable $X$ is uniquely fixed by the probability distribution in the classical basis.
Classical subensemble decompositions include coarse grainings,
but also include 
collections of initial distributions with overlapping support.
The quantum case furthermore allows for basis freedom.  
In each case, the physical relevance
of subensemble decompositions
correspond to possible physical preparations of the system.
}   
With probability $p_n$, the subensemble $\sigma_n$ will be realized, in which case the random variable $X$ takes on 
the value $x(\sigma_n)$.
When the thermodynamic quantity is a linear functional of the initial state---as is the case for work, heat, entropy flow, etc.---it's easy to see that 
\begin{align}
	\Bigl\langle X\bigl( (p_n)_n, (\sigma_n)_n \bigr) \Bigr\rangle_n
	= \sum_n x(\sigma_n) p_n
	= x \Bigl( \sum_n \sigma_n p_n \Bigr)
	= x(\rho_0^\text{tot}) ~,
\end{align}	
independent of the decomposition.
In the main text, we always consider $\rho_0^\text{tot} = \rho_0 \otimes \rho_0^\text{env}$ for a fixed initial state of the environment $\rho_0^\text{env}$.
For these thermodynamic
linear functionals, 
we thus use the shorthand notation \begin{align}
	\braket{X}_{\rho_0} = x(  \rho_0 \otimes \rho_0^\text{env} )  ~.
\end{align}

As a consequence, the expectation value
$\braket{X}_{\rho_0}$ 
and
thermodynamic operator $\mathcal{X}$
are both 
invariant to subensemble decompositions.

Let's take thermodynamic work (without measurement interventions)
as an example.
From Eq.~\eqref{eq:WorkOnJointIput}, we see that work can be quantified for any 
initial joint state $\sigma$ of the supersystem 
via 
\begin{align}
w(\sigma)= \tr \Bigl[ \int_0^\tau \tr_{\text{env}}( U_t \sigma U_t^\dagger )  \dot{H}_t  \, dt  \Bigr] ~.
\end{align}

If the system and environment are unmeasured, then it is natural 
for 
eigenvalues of the uncorrelated density matrix to represent subjective uncertainty. 
We can thus take 
$\bigl( (p_n)_n, \ket{n} \! \bra{n} \bigr)$ as our subensemble decomposition,
where $\ket{n}$ is an eigenstate of $\rho_0^\text{tot}$.
In this case, work will be a random variable $W$
which takes on the value $w^{(n)} = w\bigl( \ket{n} \! \bra{n}  \bigr)$
with probability
$\sum_m p_m \delta_{w^{(m)} , w^{(n)} }$.
We can then write
\begin{align}
\braket{W}_{\rho_0} 
= \sum_n  w^{(n)} p_n 
= w(\rho_0 \otimes \rho_0^\text{env})
= \tr( \rho_0 \mathcal{W}) ~.
\label{eq:QuantumRV_and_operator}
\end{align}
We see that the expectation value and the thermodynamic operator are both independent of the chosen subensemble decomposition.

\section{Operator expressions for type-II expectation values}
\label{sec:OperatorExpressionForType2}

The expectation values of entropy production, the reduction in nonequilibrium free energy, and the change in von Neumann entropy
can all be written as:
\begin{align}
	f_{\rho_0}^{(\genop)}
	&= \tr(\rho_0 \genop) + S(\rho_\tau) - S(\rho_0)\\
	&= \tr(\rho_0 \ln \rho_0) - \tr(\rho_0 \ln e^{-\genop})  +  S(\rho_\tau) \\
	&= \tr(\rho_0 \ln \rho_0) - \tr\{\rho_0 \ln [e^{-\genop} / \tr(e^{-\genop})] \}  - \ln [ \tr(e^{-\genop})] +  S(\rho_\tau) \\
	&=
	\text{D}[\rho_0 \| \omega] - \ln[ \tr(e^{-\genop}) ] + S(\rho_\tau) ~,
\end{align}
where $\omega \coloneqq e^{-\genop} / \tr(e^{-\genop})$.

For reliable reset processes, for which $\rho_\tau \approx r_\tau$ is very nearly independent of the input,
it is clear that $\omega$ minimizes $f_{\rho_0}^{(\genop)}$,
where it takes on the minimal value 
$f_{\omega}^{(\genop)} = S(r_\tau) - \ln[ \tr(e^{-\genop}) ]$.

Ref.~\cite{Kolc21_State} 
pointed out that entropy production can be written in this form.
Here we notice that this type of relation applies also to a larger family of thermodynamic quantities.

\section{Generalized Gell-Mann matrices as a standard operator basis}
\label{sec:GGM}

For concreteness, we will describe an operator basis that can be used in any finite dimension $d$.

Let 
$\bigl\{ \ket{1} , \dots , \ket{d} \bigr\}$
be an orthonormal basis for 
the $d$-dimensional Hilbert space of our system under study.
We can then construct the generalized Gell-Mann matrices to complete a matrix basis for density matrices acting on this Hilbert space.

Let $n \in \mathbb{R}$ be any convenient constant.
$\vec{\Gamma}$ will contain
$d-1$ diagonal matrices
\begin{align}
n 
	\Bigl( \tfrac{2}{\ell^2 + \ell} \Bigr)^{\! 1/2}
\,
	\Bigl[ \Bigl( \sum_{j=1}^\ell \ket{j} \bra{j} \Bigr) - \ell \ket{\ell+1} \bra{\ell+1} \Bigr] ~,
\end{align}
with $\ell \in \{ 1, \dots , d-1 \}$.	
It
will contain
$(d^2 - d)/2$
distinct non-diagonal symmetric matrices
\begin{align}
n \bigl( \ket{k}\bra{j} + \ket{j}\bra{k}  \bigr) ~,
\end{align}	
with $1 \leq j < k \leq d$.
It will also contain
$(d^2 - d)/2$
distinct antisymmetric matrices
\begin{align}
	i n \bigl(  \ket{k}\bra{j} - \ket{j}\bra{k} \bigr) ~,
\end{align}	
with $1 \leq j < k \leq d$.
The ordering of these $d^2-1$ matrices is arbitrary.

Notice that these operators are traceless and satisfy 
$\tr(\Gamma_m \Gamma_n) = 2 n^2 \delta_{m,n}$.
Hence, 
$\constname = 2 n^2$.
If we choose $n = \sqrt{\tfrac{d - 1}{2d}}$,
then $\constname = \const$,
pure states will always have a generalized Bloch vector of unit length,\footnote{This can be verified through the expression $\tr(\rho^2) = \tr \bigl[ (I/d + \vec{b} \cdot \vec{\Gamma}) (I/d + \vec{b} \cdot \vec{\Gamma}) \bigr]$,
from which we find	
$b = \sqrt{\frac{ \tr(\rho^2) d - 1}{ d-1 }}$.} 
and the generalized Bloch vector will reduce to the standard Bloch vector in $d=2$.

\section{Composite operator bases}
\label{sec:Composite_bases}

Suppose we have normalized Hermitian operator bases, ($\ident_d / d , \, \vec{\Gamma}$) for a $d$-dimensional subsystem,
and ($\ident_{d'} / d', \, \vec{\Gamma'}$) for a
$d'$-dimensional subsystem.
The operator bases satisfy
\begin{align}
	\tr(\Gamma_n) = 0 \quad \text{and} \quad
	\tr(\Gamma_m \Gamma_n) = \constname \delta_{m,n}
\end{align}	
and 
\begin{align}
	\tr(\Gamma_n') = 0 \quad \text{and} \quad
	\tr(\Gamma_m' \Gamma_n') = \constname' \delta_{m,n} ~.
\end{align}	

We can then construct a normalized Hermitian basis for operators acting on the composite $d''$-dimensional Hilbert space, where $d''=dd'$:
The new operator basis,
($\ident_{d''}/d'' , \, \vec{\Gamma''}$)
satisfies 
\begin{align}
	\tr(\Gamma_n'') = 0 \quad \text{and} \quad
	\tr(\Gamma_m'' \Gamma_n'') = \constname'' \delta_{m,n} ~,
	\label{eq:CompositeProperties}
\end{align}	
where the new composite operator basis contains
\begin{align}
\vec{\Gamma''} = \Bigl( 
\sqrt{\tfrac{\constname''}{\constname' d}} \ident_d \otimes \vec{\Gamma'} , \, \, \,
\sqrt{\tfrac{\constname''}{\constname d'}} \vec{\Gamma''} \otimes \ident_{d'}  , \, \, \,
\sqrt{\tfrac{\constname''}{\constname \constname'}} \vec{\Gamma} \otimes \vec{\Gamma'} 
\Bigr)	~.
\end{align}	
Eq.~\eqref{eq:CompositeProperties} 
can be verified
via the identities
$(A \otimes B) (C \otimes D) = (AC) \otimes(BD)$
and 
$\tr(A \otimes B) = \tr(A) \tr(B)$.

This allows us, for example, to build up a normalized Hermitian operator basis for
\begin{itemize}
	\item
many qubits, 
using tensor products of the local Pauli operators,
or 
\item
a qubit and qutrit, using tensor products of their local Pauli and Gell-Mann operators, 
\item
etc.
\end{itemize}

\section{Existence of, and expression for, the entropy-flow vector}
\label{sec:EFvec}

Leveraging the decomposition of the system state in terms of its generalized Bloch vector,
$\rho_t = \sysref + \vec{b}_t \cdot \vec{\Gamma}$,
we can expand the general expression for entropy flow
given in Ref.~\cite{Riec21_Impossibility} and discussed in the main text,
to express entropy flow in terms of the entropy-flow vector and initial Bloch vector:
\begin{align}
\braket{\EF}_{\rho_0} 
&= 
- \kB \int_0^\tau \tr \bigl( \dot{\rho}_t^\text{env}  \ln \stationary_{t}^\text{env} \bigr) \, dt \\
&= 
- \kB \int_0^\tau \tr \biggl\{  \Bigl[ \tfrac{d}{dt} \tr_\text{sys} \bigl( U_{0:t} \rho_0 \otimes \rho_0^\text{env} U_{0:t}^\dagger \bigr) \Bigr]
   \ln \stationary_{t}^\text{env} \biggr\} \, dt \\
&= 
- \kB \int_0^\tau \tr \biggl\{  \Bigl[ \tfrac{d}{dt} \tr_\text{sys} \bigl( U_{0:t} (\sysref + \vec{b}_0 \cdot \vec{\Gamma}) \otimes \rho_0^\text{env} U_{0:t}^\dagger \bigr) \Bigr]
\ln \stationary_{t}^\text{env} \biggr\} \, dt \\   
&= 
\underbrace{- \kB \int_0^\tau \tr \biggl\{  \Bigl[ \tfrac{d}{dt} \tr_\text{sys} \bigl( U_{0:t} \sysref \otimes \rho_0^\text{env} U_{0:t}^\dagger \bigr) \Bigr]
\ln \stationary_{t}^\text{env} \biggr\} \, dt }_{= \braket{\EF}_{\sysref}}
+ 
\vec{b}_0 \cdot 
\Biggl(
\underbrace{
 - \kB \int_0^\tau \tr \biggl\{  \Bigl[ \tfrac{d}{dt} \tr_\text{sys} \bigl( U_{0:t}  \vec{\Gamma} \otimes \rho_0^\text{env} U_{0:t}^\dagger \bigr) \Bigr]
\ln \stationary_{t}^\text{env} \biggr\} \, dt }_{\eqqcolon \vec{\varphi}}  \Biggr)
 \\  
&= \braket{\EF}_{\sysref} + \vec{b}_0 \cdot \vec{\varphi} ~.
\end{align}

Other thermodynamic quantities can be decomposed similarly.

\section{Expression for the second derivative of von Neumann entropy}
\label{sec:SecondDerivativeOfEntropy}

The second derivative of von Neumann entropy, with respect to the elements of the Bloch vector of the quantum state,
can be calculated via the spectral decomposition of the quantum state
$\pbb = \sum_{k}\lambda_{k} \ket{k} \bra{k}$ and its logarithm
$\ln \pbb = \sum_{k} \ln (\lambda_{k} ) \ket{k} \bra{k}$.
Assuming 
non-degeneracy of the eigenvalues, 
we find
\begin{align}
	\partial_{b_{m}} \partial_{b_{n}} S [\pbb]
	&=
	- \partial_{b_{m}} 
	 \tr \bigl[ \Gamma_{n} \ln \pbb \bigr]  \\
	 &= -\tr \bigl[ \Gamma_{n} \, \partial_{b_{m}} \ln \pbb \bigr] ~,
\end{align}
where
\begin{align}
 \partial_{b_{m}} \ln \pbb 
 &=
 \sum_k \biggl\{ 
 \frac{1}{\lambda_k} \underbrace{( \partial_{b_{m}} \lambda_k ) }_{ \braket{k | \Gamma_m | k}}
 \ket{k} \bra{k} + \ln (\lambda_k) \Bigl[ 
 \!\!\!\!
 \underbrace{ \bigl( \partial_{b_{m}} \ket{ k} \bigr) }_{\sum_{\ell \neq k} \frac{\braket{\ell| \Gamma_m| k} }{ \lambda_{k} - \lambda_{\ell}} \ket{\ell} }  \!\!\!\!\!\!  \bra{ k } \,
 + \,
 \ket{k} 
  \!\!\!\!\!\!
 \underbrace{ \bigl( \partial_{b_{m}} \bra{ k} \bigr)}_{\sum_{\ell \neq k} \frac{\braket{ k | \Gamma_m| \ell } }{ \lambda_{k} - \lambda_{\ell}} \bra{\ell} }  \!\!\!\!
\Bigr]
  \biggr\}  ~.
\end{align}

Hence,
\begin{align}
	\partial_{b_{m}} \partial_{b_{n}} S [\pbb]
	&= 
	\biggl(  \sum_k \tfrac{1}{\lambda_k} 
	\braket{k | \Gamma_n | k} \braket{ k | \Gamma_m | k} \biggr)
	+
	\biggl(  \sum_{k, \ell \atop \ell \neq k} \frac{ \ln \lambda_k }{\lambda_k - \lambda_\ell} 
	\braket{k | \Gamma_n | \ell} \braket{ \ell | \Gamma_m | k} \biggr)
	+
	\biggl(  \sum_{k, \ell \atop \ell \neq k} \frac{ \ln \lambda_k }{\lambda_k - \lambda_\ell} 
	\braket{\ell | \Gamma_n | k} \braket{ k | \Gamma_m | \ell} \biggr) ~.
	\label{eq:SecondDerivativeOfEntropy}
\end{align}
Upon swapping the name of the indices in the last of the three terms on the right-hand side of Eq.~\eqref{eq:SecondDerivativeOfEntropy}, we find
\begin{align}
 \sum_{k, \ell \atop \ell \neq k} \frac{ \ln \lambda_k }{\lambda_k - \lambda_\ell} 
\braket{\ell | \Gamma_n | k} \braket{ k | \Gamma_m | \ell} 
&=
 \sum_{\ell, k \atop k \neq \ell} \frac{ \ln \lambda_\ell }{\lambda_\ell - \lambda_k} 
\braket{k | \Gamma_n | \ell} \braket{ \ell | \Gamma_m | k} \\
&=
 \sum_{k, \ell \atop \ell \neq k}
 \frac{ - \ln \lambda_\ell }{\lambda_k - \lambda_\ell} 
\braket{k | \Gamma_n | \ell} \braket{ \ell | \Gamma_m | k} 
~.
\end{align}	
Noting that 
$\lim_{b \to a} \frac{\ln a-\ln b}{a-b} = 1/a$,
we can thus combine all three terms 
on the right-hand side of Eq.~\eqref{eq:SecondDerivativeOfEntropy} 
to find 
\begin{align}
	\partial_{b_{m}} \partial_{b_{n}} S [\pbb]
	&=
	- \sum_{k,\ell} \phi(\lambda_{k},\lambda_{\ell}) \braket{ k \vert \Gamma_{n} \vert \ell } \braket{  \ell \vert \Gamma_{m} \vert k }
	~,
\end{align}
where 
we defined $\phi(a,b) \coloneqq \frac{\ln a-\ln b}{a-b}$ (with $\phi(a,a)=1 / a$
by continuity).
Note that $\phi(a,b)$ is the reciprocal of the logarithmic mean of the eigenvalues $a$ and $b$. 

Similarly, 
using the spectral decomposition
of $\pbbF = \sum_{k} \lambda_{k}' \ket{ k' } \bra{k'}$,
we find 
\begin{align}
	\partial_{b_{m}} \partial_{b_{n}} S [\pbbF]
	&=
	- \sum_{k,\ell} \phi(\lambda_{k}',\lambda_{\ell}') \braket{ k' \vert \Gamma_{n}' \vert \ell' } \braket{  \ell' \vert \Gamma_{m}' \vert k' }
	~.
\end{align}

\section{Dynamical equations for a qubit-reset device}
\label{sec:EOMforReset}

Here we review the dynamical equations for a particular device that 
implements
qubit reset, used in our first example:   
As in Refs.~\cite{Mill20, Riec21_Impossibility},
the device works 
by changing both the energy gap and spatial orientation of the energy eigenstates of the qubit while it is in weak contact with a thermal reservoir at inverse temperature $\beta = 1/(\kB T)$.
Over the finite-time protocol, 
from time $0$ through $\tau = 50 \beta \hbar$,
the time-varying Hamiltonian is
\begin{align}
	H_{t} = \frac{E_t}{2} \bigl[ \cos (\theta_t) \sigma_z + \sin (\theta_t) \sigma_x \bigr] ~,
\end{align}
with 
$E_t = \kB T [1 + 49 \sin^2 \bigl( \tfrac{\pi t }{ 100 \beta \hbar } \bigr)] / 5$
and 
$\theta_t = \pi t / (50 \beta \hbar)$.
While $E_t$ quantifies the energy gap between the system's instantaneous energy eigenstates,
$\theta_t$ parametrizes the instantaneous orientation of the energy eigenbasis relative to the `computational' $z$-basis.
The dynamics are
well-described by a time-dependent quantum master equation
\begin{align}
	\dot{\rho}_t = \mathcal{L}_{x_t} (\rho_t)
	= \frac{i}{\hbar} [\rho_t, H_{x_t}] 
	& + \frac{c E_t}{\hbar} (N_{x_t} + 1) \mathcal{D}[L_{x_t}] (\rho_t) 
+ \frac{c E_t}{\hbar} N_{x_t} \mathcal{D}[L_{x_t}^\dagger] (\rho_t)
\end{align} 
where $\mathcal{D}[L](\rho) = L \rho L^\dagger - \tfrac{1}{2} \{ L^\dagger L, \rho \}$,
$N_{x_t} = (e^{\beta E_t} - 1)^{-1}$,
and $c = 1/5$ is the coupling strength to the bath.
The time-dependent lowering operator can be represented as
\begin{align}
	L_{x_t} = \frac{1}{2} 
	\bigl[
	\cos (\theta_t) \sigma_x
	- i \sigma_y
	- \sin (\theta_t) \sigma_z  
	\bigr]
\end{align}
and satisfies the detailed balance condition 
$[L_{x_t}, H_{x_t}] = E_t L_{x_t}$ ~\cite{Mill20, Manz15}.
Transitions thus occur between
instantaneous energy eigenstates of the system.
In particular, $\mathcal{D}[L_{x_t}] (\rho_t)$
takes the excited population and shifts it to the ground state,
while $ \mathcal{D}[L_{x_t}^\dagger] (\rho_t)$
takes the ground-state population and shifts it to the excited state.
Moreover, the ratio of transition rates between the excited and ground-state populations satisfies detailed balance since 
$(N_{x_t} + 1) / N_{x_t} = e^{\beta E_t}$.

\section{Time-dependent double-well dynamics}
\label{sec:DoubleWellDynamics}

 Here we construct a simple model for the dynamics of a non-relativistic charged
 particle in a time-dependent double-well of potential energy across one spatial dimension while it is immersed in a bosonic bath at temperature $T$.
The protocol is 
capable of extracting work from some
states that are initially localized in the right well.

In particular, we simulate a work-extraction protocol via the time-dependent system Hamiltonian:
\begin{align}
	H_t = \frac{p_x^2}{2m_q} + V_t
\end{align}	
where 
$p_x = -i\hbar \partial_x$ is the momentum operator and 
$m_q$ is the mass of the system with charge $q$.
The time-dependent potential-energy landscape is
\begin{align}
	V_t = 
	16 h_0 \Bigl( \frac{x}{w_0} \Bigr)^4 
	- 8  h_0 \Bigl( \frac{x}{w_0} \Bigr)^2 g_t 
	- h_0 \frac{x}{w_0}  f_t
	~, 
\end{align}
where 
$\barrier_t$ and
$\tilt_t$
are non-negative scalar functions of time,
with $\barrier_0 = \barrier_\tau = 1$
and 
$\tilt_0 = \tilt_\tau = 0$.
To enter an interesting thermodynamic regime, we choose
the initial energy barrier between the two wells to be $h_0 = 8 \kB T$,
where $\kB T = 1/\beta$ is the thermal energy of the environment.
To enter an interesting quantum regime, we choose
the initial separation between the bottom of the two wells to be $w_0 = 3 \lambda_\text{th}$, where 
$\lambda_\text{th} = \hbar \sqrt{2 \pi \beta / m_q}$
is the thermal de Broglie wavelength.
To induce nonequilibrium quantum dynamics,
we choose a sufficiently fast protocol.

$\barrier_t$ controls the height of the barrier
while $\tilt_t$ controls the tilt of the energetic landscape throughout the protocol.
During the protocol, the potential is tilted to the right, the barrier is lowered, the potential is untilted, and the barrier is then reintroduced, in that order.
The exact form of the control protocol is given by
\begin{align}
	g_t = 
	\begin{cases}
		1 			& \text{if } 0 \leq t < \tau/4 \\
		\sin^2(2 \pi t / \tau) & \text{if } \tau/4 \leq t < \tau/2 \\
		0 & \text{if } \tau/2 \leq t < 3 \tau/4 \\
		\cos^2(2 \pi t / \tau) & \text{if } 3 \tau/4 \leq t \leq \tau 
	\end{cases}	
\end{align}	
and
\begin{align}
	f_t / f^\text{max} = 
	\begin{cases}
		\sin^2(2 \pi t / \tau)	& \text{if } 0 \leq t < \tau/4 \\
		1 & \text{if } \tau/4 \leq t < \tau/2 \\
		\cos^2(2 \pi t / \tau) & \text{if } \tau/2 \leq t < 3 \tau/4 \\
		0 & \text{if } 3 \tau/4 \leq t \leq \tau 
	\end{cases}	
~.
\end{align}	
We chose the maximal tilt to be 
$f^\text{max} = 8/(3 \sqrt{2 \pi})$.

The system Hamiltonian $H_t = p_x^2 / 2m_q + V_t$ has a countably infinite orthonormal set of instantaneous energy eigenstates $\{ \ket{E_t^{(n)}} \}_n$ with corresponding instantaneous energy eigenvalues $\{ E_t^{(n)} \}_n$, ordered such that $E_t^{(n)}  \geq E_t^{(m)} $ if $n > m$.
Hence, $H_t \ket{E_t^{(n)}} = E_t^{(n)} \ket{E_t^{(n)}} $,
and the Hamiltonian has the simple eigen-representation
$H_t = \sum_{n = 1}^{\infty} E_t^{(n)} \ket{E_t^{(n)}} \! \bra{E_t^{(n)}} $.
It is useful to represent differences among eigen-energies via the relevant angular frequencies $\omega_t^{(n,m)}$, such that $\hbar \omega_t^{(n,m)} = E_t^{(n)} - E_t^{(m)}$.

The system is immersed in a photon bath throughout the protocol.
Photons in the environment can induce transitions among the instantaneous energy eigenstates of the system, with 
absorption and stimulated-emission 
transition rates proportional to the number of photons with the relevant transition energy $E_t^{(n)} - E_t^{(m)}$.  
In general, this expected number of bath photons, per mode with frequency $\omega / 2 \pi$, can be a function of time $\braket{N^{(\omega)}}$, which is related to the intensity.

Suppose for now that the energy eigenstates of the system are non-degenerate.~\footnote{Since the energy eigenstates are non-degenerate almost everywhere along the continuous time interval from 0 to $\tau$,
we will not worry about the potential complications of choosing a preferred basis for transitions to or from an energy-degenerate subspace.}
Let $L_t^{(m,n)} = \ket{E_t^{(m)}} \bra{E_t^{(n)}}$.
If $E_t^{(n)} > E_t^{(m)}$, this can be interpreted as 
the lowering operator between these two instantaneous energy eigenstates.
It is useful to reflect on the physical implications of the mathematical dissipator $\mathcal{D}[L ] (\rho) = L \rho L^\dagger - \tfrac{1}{2} \{ L^\dagger L , \rho \}$.
Notice that the 
dissipator 
\begin{align}
\mathcal{D}[L_t^{(m,n)} ] (\rho_t) =  \ket{E_t^{(m)}} \bra{E_t^{(n)}} \rho_t  \ket{E_t^{(n)}} \bra{E_t^{(m)}}
- \tfrac{1}{2}  \ket{E_t^{(n)}} \bra{E_t^{(n)}} \rho_t
- \tfrac{1}{2}  \rho_t \ket{E_t^{(n)}} \bra{E_t^{(n)}} 
\end{align}
fully removes the $\ket{E_t^{(n)}} $ population
and shifts it down to $\ket{E_t^{(m)}} $.
Analogously, 
$\mathcal{D}[(L_t^{(m,n)})^\dagger ] (\rho_t) = \mathcal{D}[L_t^{(n,m)} ] (\rho_t) $
fully removes the $\ket{E_t^{(m)}} $ population
and shifts it up to $\ket{E_t^{(n)}} $.
The \emph{rate} at which these two processes happen can be denoted by $r_t^{(n \to m)}$ and $r_t^{(m \to n)}$ respectively.
This can then be integrated into a quantum master equation that takes on the Linblad form:
\begin{align}
	\dot{\rho}_t = \mathcal{L}_{t} (\rho_t)
	= \frac{i}{\hbar} [\rho_t, H_{t}] 
	& + 
	\sum_n \sum_{m < n}
	r_t^{(n \to m)}  \mathcal{D}[L_t^{(m,n)} ] (\rho_t) 
	+
	r_t^{(m \to n)}  \mathcal{D}[L_t^{(n,m)} ] (\rho_t) ~.
	\label{eq:Generic_Lindbladian}
\end{align} 

As already mentioned, the rate of excitation $r_t^{(m \to n)} $ will be proportional to the expected number of bath particles $\braket{N^{(\omega)}}$ with the relevant energy $\hbar \omega_t^{(n, m)}$.
The rate of emission is however more subtle, since it involves both stimulated and spontaneous emission.

A standard quantum electrodynamic calculation (introduced a century ago by Dirac~\cite{Dirac27_Quantum}) 
relates absorption and emission between any two energy eigenstates of a system, given 
the photon intensity of the environment.
In particular, following 
Fermi's golden rule for transitions on the joint state space of system and photons 
~\cite[\S 1.3]{Schwartz14_Quantum}, for $E_t^{(m)} < E_t^{(n)}$, the rate of absorption is given by
\begin{align}
	r_t^{(m \to n)} = \frac{2 \pi}{\hbar} | \mathcal{M}_t^{(m,n)} |^2 \braket{N^{(\omega_t^{(n,m)})}}
	\label{eq:GenAbsorptionRate}
\end{align}	   
while the net rate of both stimulated and spontaneous emission
is given by
\begin{align}
	r_t^{(n \to m)} = \frac{2 \pi}{\hbar} | \mathcal{M}_t^{(m,n)} |^2 \bigl( \braket{N^{(\omega_t^{(n,m)})}} + 1 \bigr) ~.
	\label{eq:GenEmissionRate}
\end{align}	   
Here, $\mathcal{M}_t^{(m,n)}$ is the transition amplitude between 
$\ket{E_t^{(m)} }$ and $\ket{E_t^{(n)} }$
induced by the background radiation.
We use the standard interaction Hamiltonian $H_I = \frac{-q}{m_q c } A_x \otimes p_x$ 
on the joint state space of photons and system,
where 
$c$ is the speed of light, and
$A_x $ is the $x$-component of polarization 
of the
quantum field for the electromagnetic vector potential $\vec{A}(\vec{r}, t)$.
For simplicity, we assume the system is some distance $z$ from the photon source,
so that the wavevector of radiation is $\vec{k} = k \hat{z} = (\omega / c) \hat{z} \perp \hat{x}$.
In the standard electric-dipole approximation, 
the absorption cross section---i.e., (power absorbed by the $m \to n$ transition) / (incident power per area)---is given by
$\sigma_\text{abs} = 4 \pi^2 (q/\text{e})^2 \alpha \omega \bigl| \braket{E_t^{(m)} | x | E_t^{(n)} } \bigr|^2 \delta(\omega - \omega_t^{(n,m)})$,
where e is the charge of a single electron, and $\alpha \approx 1/137$ is the fine-structure constant~\cite{Saku14_Modern}.
Hence the transition rate of absorption is related to the spectral intensity (power per area per angular frequency) $\mathcal{I}(\omega)$
via 
\begin{align}
	r_t^{(m \to n)} 
	&= 
	\int_\omega
	\frac{\mathcal{I}(\omega) \sigma_\text{abs}}{\hbar \omega}
	\, d \omega
	\\
	&= 
	\frac{4 \pi^2 \alpha}{\hbar}   \Bigl( \frac{q}{\text{e}} \Bigr)^2
	\bigl| \braket{E_t^{(m)} | x | E_t^{(n)} } \bigr|^2 \, \mathcal{I}(\omega_t^{(n,m)})  ~.
\end{align}	   
Arbitrarily far from equilibrium,
intensity and expected occupation are 
related by 
$\mathcal{I}(\omega) = \frac{\hbar \omega^3}{\pi^2 c^2} \braket{N^{(\omega)}}$.
Incorporating this, we find
\begin{align}
	r_t^{(m \to n)} 
	&= 
	\frac{4 \alpha }{c^2} 
	(q/\text{e})^2
	(\omega_t^{(n,m)})^3  
	\bigl| \braket{E_t^{(m)} | x | E_t^{(n)} } \bigr|^2 \, \braket{N^{(\omega_t^{(n,m)})}} \\
&= 
	\frac{\kB T}{\hbar}
	8 \pi \alpha (q/\text{e})^2  
	\frac{\kB T}{m_q c^2}
	\Bigl( \frac{\hbar \omega_t^{(n,m)} }{ \kB T } \Bigr)^3
	\bigl| \braket{E_t^{(m)} | (x / \lambda_\text{th} ) | E_t^{(n)} } \bigr|^2 \, \braket{N^{(\omega_t^{(n,m)})}}  
	\label{eq:AbsorptionRate_final}
\end{align}	   
for $n>m$.
Comparing with Eqs.~\eqref{eq:GenAbsorptionRate} and \eqref{eq:GenEmissionRate},
this also implies 
\begin{align}
	r_t^{(n \to m)} 
	&= 
	\frac{\kB T}{\hbar}
	8 \pi \alpha (q/\text{e})^2  
	\frac{\kB T}{m_q c^2}
	\Bigl( \frac{\hbar \omega_t^{(n,m)} }{ \kB T } \Bigr)^3
	\bigl| \braket{E_t^{(m)} | (x / \lambda_\text{th} ) | E_t^{(n)} } \bigr|^2 \, \bigl( \braket{N^{(\omega_t^{(n,m)})}}  + 1 \bigr)	
	\label{eq:EmmissionRate_final}
\end{align}	   
for $n>m$.

The dynamics can thus be expressed as \begin{align}
	\dot{\rho}_t 
	= \frac{i}{\hbar} [\rho_t, H_{t}] 
	& + \gamma
	\sum_n \!\!
	\sum_{m < n} \!
	\Bigl( \frac{\hbar \omega_t^{(n,m)} }{ \kB T } \Bigr)^3
	\bigl| \braket{E_t^{(m)} | \frac{x}{\lambda_\text{th} } | E_t^{(n)} } \bigr|^2 \Bigl\{ \!
	\bigl( \braket{N^{(\omega_t^{(n,m)})}} + 1 \bigr) \mathcal{D}[L_t^{(m,n)} ] (\rho_t) 
	+
	\braket{N^{(\omega_t^{(n,m)})}} \mathcal{D}[L_t^{(n,m)} ] (\rho_t) \! \Bigr\} ,
\end{align}	
where 
$\gamma = 
\frac{\kB T}{\hbar}
8 \pi \alpha (q/\text{e})^2  
\frac{\kB T}{m_q c^2}$.

In the main text, we assume that the photon bath 
is always in equilibrium at temperature $T$.
In this case,
the photons exhibit the standard Bose--Einstein statistics 
$ \braket{N^{(\omega)}} = (e^{\beta \hbar \omega} - 1)^{-1}$,
and
the dynamics reduce to
\begin{align}
	\dot{\rho}_t 
	= \frac{i}{\hbar} [\rho_t, H_{t}] 
	& + \gamma 
	\sum_n \!\!
	\sum_{m < n} \!
	\Bigl( \frac{\hbar \omega_t^{(n,m)} }{ \kB T } \Bigr)^3
	\bigl| \braket{E_t^{(m)} | \frac{x}{ \lambda_\text{th} } | E_t^{(n)} } \bigr|^2 \biggl\{ 
	\frac{ e^{\beta \hbar \omega_t^{(n,m)} }  }{e^{\beta \hbar \omega_t^{(n,m)} } - 1} 
\mathcal{D}[L_t^{(m,n)} ] (\rho_t) 
	+
	\frac{1}{e^{\beta \hbar \omega_t^{(n,m)} } - 1}  \mathcal{D}[L_t^{(n,m)} ] (\rho_t) \! \biggr\}  .
\end{align}	

Notice that, in the presence of the equilibrium photon bath, 
the relative transition rates between the system's instantaneous energy eigenstates  satisfy detailed balance, such that 
\begin{align}
	\frac{r_t^{(n \to m)}}{r_t^{(m \to n)}} = e^{\beta   ( E_t^{(n)} - E_t^{(m)} ) } ~.	
\end{align}

We choose a regime for the quantum nonequilibrium thermodynamics that is, at least plausibly, experimentally accessible.
Notice that coherent quantum dynamics are very fast, 
since 
$\kB / \hbar 
= 131$ GHz/K.
Meanwhile, 
for a single electron:
$q/\text{e}=1$, 
$\kB / (m_\text{e} c^2) = 1.69 \times 10^{-10} /$K,
and 
$\lambda_\text{th} = 74.6$ nm/$\sqrt{T/\text{K}}$.
At $T=300$ K,
the thermal energy is 
$\kB T = 25.9$ meV,
the thermal wavelength of the electron is
$\lambda_\text{th} = 4.31$ nm,
while the coherent dynamics occur on the very fast timescale of 
$\beta \hbar = 25.4$ fs,
and the relaxation dynamics occur at the much slower rate 
on the order of 
$\gamma 
= 9.3 \times 10^{-9} / (\beta \hbar)
= 366$ kHz.

 \subsection{Quantum evolution with separation of timescales}
 \label{sec:QuantumDynamicsOfTimescaleSep}
 
 The factor $\kB / (m_\text{e} c^2) = 1.69 \times 10^{-10} /$K
 (in the dissipative $\gamma$ term)
 induces a separation of timescales
 between the coherent dynamics and the consequently much slower relaxation dynamics.
 In particular if, over the timescale $\beta \hbar$,
 the Hamiltonian is approximately constant
 and the approximately-constant relaxation rates are very small
(such that $\beta \hbar r^{(m \to n)} \ll 1$),
then it is useful to 
derive and employ the following
discrete-time dynamics.

It will be fruitful to write the full Lindblad superoperator of Eq.~\eqref{eq:Generic_Lindbladian}
as 
\begin{align}
	\mathcal{L} = C + D ~,
\end{align}
and explore the discrete-time evolution superoperator
$e^{t \mathcal{L} } $
over a duration $t$ where $\mathcal{L}$ is approximately
unchanging,
and in a regime where $(tD)^2$ is negligible.
In our case, $C(\cdot) = \frac{i}{\hbar} [\cdot, H]$ is the typical coherent superoperator,
and $D$ describes the dissipative dynamics.
Employing 
the Lie product formula
and noting that 
$e^{t D / N } = I + t D / N + \mathcal{O}\bigl( (t D)^2 \bigr)$,
we find
\begin{align}
	e^{t \mathcal{L} }
	&= 
	e^{t (C+D) }
	=
	\lim_{N \to \infty} \bigl( e^{t C /N} e^{t D/N} \bigr)^N \\
	&\approx e^{tC}
	    + \lim_{N \to \infty} \sum_{n=0}^{N-1} e^{ntC/N} \tfrac{t}{N} D e^{(N-n-1)t C/N} \\
	&=e^{tC} + \int_0^t dt' \, e^{t' C} D e^{(t-t') C}    
	 ~.
	 \label{eq:Perturbative_dissipation_ev}
\end{align}

Using Eq.~\eqref{eq:Perturbative_dissipation_ev}
together with  Eq.~\eqref{eq:Generic_Lindbladian} explicitly,
a long calculation yields
\begin{align}
	e^{t \mathcal{L} }(
	\rho)
	&= 
	U \rho U^\dagger - \tfrac{1}{2} t \sum_n \bigl[ I - \ket{n} \! \bra{n} \tr \bigr] \Bigl( \bigl\{ \sum_{m \neq n} r^{(m \to n)} \ket{m} \! \bra{m} , \, U \rho U^\dagger \bigr\} \Bigr)
	~,
	\label{eq:Final_discrete_dyn}
\end{align}
where 
$U = e^{-i H t / \hbar}$
and $\{ \cdot , \cdot \}$ is the anticommutator.

In our simulations, we restrict the dynamics at each time-step to the twenty
lowest-lying instantaneous energy eigenstates, so that $m$ and $n$ in Eq.~\eqref{eq:Final_discrete_dyn} both range over these twenty states.
To approximate the dynamics, we compound
a sequence of many discrete steps---each of the form 
$e^{(\delta t) \mathcal{L}_t}$
of Eq.~\eqref{eq:Final_discrete_dyn},
but with each step using the instantaneous Lindblad
superoperator $\mathcal{L}_t$ 
induced by 
(i) 
the instantaneous energy eigenstates
of the Hamiltonian $H_t$
and 
(ii)
the
instantaneous transition rates
given by
Eqs.~\eqref{eq:AbsorptionRate_final} and \eqref{eq:EmmissionRate_final}.

Recall that our simulations assume room temperature (300 K), 
and so 
transition rates are 
on the order of 
$\gamma 
= 9.3 \times 10^{-9} / (\beta \hbar)
= 366$ kHz.
Accordingly, we take 
$\delta t = 4 \times 10^{5} \beta \hbar \approx 10$ ns
as the duration of each discrete time-step.~\footnote{This regime of $(\gamma \delta t )^2 \ll 1$ indeed satisfies the assumption of negligible $(\delta t D)^2$ that justifies the use of Eq.~\eqref{eq:Final_discrete_dyn} in our simulations.}
Each step of the discrete dynamic thus allows for many coherent oscillations with only perturbative decoherence.	
 We choose the total duration of the protocol to be 
 $\tau = 4 \times 10^9 \beta \hbar 
 \approx  100 \, \mu$s,
 which is long enough to allow for many 
 transitions,  yet short enough to keep the system away from equilibrium throughout the protocol.

\subsection{Work with separation of timescales}
\label{sec:QuantumThermoDynamicsOfTimescaleSep}

Recall that the expectation value of work 
can be calculated as 
$	
\braket{W}_{\rho_0} 
= 
\int_0^\tau \tr \bigl( \rho_t \dot{H}_t  \bigr) \, dt
$.
During a simulation, the integral is approximated numerically 
via breaking time into many bins, each much smaller than $\tau$.
In typical simulations, with $N$ timesteps of duration $\tau' = \tau/N$,
the expected work can be approximated as~\cite{Croo98a}
\begin{align}
	\braket{W}_{\rho_0} 
	\stackrel{?}{\approx}
	\sum_{n=0}^{N-1}
	\tr \bigl[ \rho_{n \tau'} ( H_{(n+1) \tau'} - H_{n \tau'} ) \bigr] ~,
	\label{eq:TypicalNumericalWorkEstimator}
\end{align}	
which implicitly assumes that the density matrix 
does not change appreciably during the timestep.

However, because of the separation of timescales in our simulations,
the density matrix goes through many coherent oscillations (in the relative phase between energy eigenstates) during a single timestep.
Accordingly, 
treating the density matrix as a constant during each timestep,
as done in Eq.~\eqref{eq:TypicalNumericalWorkEstimator}
would not be appropriate in our situation.

Rather---with $N$ timesteps of duration $\tau' = \tau/N$, 
and in a regime where the rate of change of the Hamiltonian is approximately constant throughout each timestep---the expected work is well approximated by 
\begin{align}
	\braket{W}_{\rho_0} 
	&= 
	\int_0^\tau \tr \bigl( \rho_t \dot{H}_t  \bigr) \, dt  \nonumber \\
	& =
	\sum_{n=0}^{N-1}
	  	\int_{n \tau'}^{(n+1) \tau'} \tr \bigl( \rho_t \dot{H}_t  \bigr) \, dt \\
	 &\approx  	
	 \sum_{n=0}^{N-1}
	   \tr \biggl[ \Bigl( \int_{n \tau'}^{(n+1) \tau'} \rho_{n \tau'} \, dt \Bigr) \frac{ H_{(n+1) \tau'} - H_{n \tau'} }{\tau'}  \biggr] \\
	&=  	
	\sum_{n=0}^{N-1}
	\tr \Bigl[ \, \overline{\rho_{n \tau'}} \, \bigl(  H_{(n+1) \tau'} - H_{n \tau'}  \bigr) \Bigr] 
	~,
	\label{eq:BetterWorkApprox}
\end{align}	
where 
\begin{align}
\overline{\rho_{n \tau'}} \coloneqq 
\frac{1}{\tau'} \int_{n \tau'}^{(n+1) \tau'} \rho_t \, dt
\end{align}
is the time-averaged density matrix, 
averaged over the duration of the $n^\text{th}$
timestep. (Recall that our simulations use a timestep of $4 \times 10^5 \beta \hbar$.)
It is easy to see how Eq.~\eqref{eq:BetterWorkApprox} would reduce to Eq.~\eqref{eq:TypicalNumericalWorkEstimator}
in a regime where $\rho_t$ is approximately constant throughout the duration of a timestep.

As a first-order approximation to the time-averaged density matrix within a single timestep,
we can assume that the system does not make any dissipative transition.
In this case, $e^{t \mathcal{L}} = e^{t (C+D)} \approx e^{tC} $,
and the time-averaged density matrix will be well approximated by:
\begin{align}
	\overline{\rho_{n \tau'}} 
	&\approx 
	\frac{1}{\tau'} \int_{n \tau'}^{(n+1) \tau'} e^{(t - n \tau')C_{n \tau'}} (\rho_{n \tau'}) \, dt \\
	&=
	\frac{1}{\tau'} \int_{0}^{\tau'} e^{t C_{n \tau'}} (\rho_{n \tau'}) \, dt \\
	&= \frac{1}{\tau'} \sum_{E, E' \in V_{H_{n \tau'}}}
	  \ket{E} \! \bra{E} \rho_{n \tau'} \ket{E'} \! \bra{E'} 
	  \int_{0}^{ \tau'} e^{i (E'-E)t/\hbar}  \, dt 
	\\
&=  \sum_{E \in V_{H_{n \tau'}}}
	\ket{E} \! \bra{E} \rho_{n \tau'} \ket{E} \! \bra{E} +
	\sum_{E' \in V_{H_{n \tau'}} \setminus \{ E \} }
	\frac{ \sin(\omega \tau' / 2) }{ \omega \tau' / 2 } \,
	e^{i \omega \tau' / 2} 
	\ket{E} \! \bra{E} \rho_{n \tau'} \ket{E'} \! \bra{E'} 
~,
\end{align}
where we have used the shorthand 
$\omega = (E' - E)/\hbar$.
Notice that energetic coherences get averaged out
via a decaying envelope with magnitude $2/(\omega \tau')$.

\end{document}